\DeclarePairedDelimiter\floor{\lfloor}{\rfloor}
\DeclareMathOperator{\payText}{\textit{pay}}
\newcommand{\pay}[1]{\payText({#1})}
\DeclareMathOperator{\valueText}{\textit{value}}
\newcommand{\val}[1]{\valueText({#1})}
\declaretheorem{falseth}
\declaretheorem[sibling=falseth]{theorem}
\declaretheorem[sibling=falseth]{lemma}
\title[Byzantine Game Theory: Sun Tzu's Boxes]{Byzantine Game Theory: Sun Tzu's Boxes}
\author{Andrei Constantinescu}
\affiliation{
  \institution{ETH Zürich}
  \city{Zürich}
  \country{Switzerland}}
\email{aconstantine@ethz.ch}
\author{Roger Wattenhofer}
\affiliation{
  \institution{ETH Zürich}
  \city{Zürich}
  \country{Switzerland}}
\email{wattenhofer@ethz.ch}
\begin{abstract}
We introduce the Byzantine Selection Problem, living at the intersection of game theory and fault-tolerant distributed computing. Here, an event organizer is presented with a group of $n$ agents, and wants to select $\ell < n$ of them to form a team. For these purposes, each agent $i$ self-reports a positive skill value $v_i$, and a team's value is the sum of its members' skill values. Ideally, the value of the team should be as large as possible, which can be easily achieved by selecting agents with the highest $\ell$ skill values. However, an unknown subset of at most $t < n$ agents are byzantine and hence not to be trusted, rendering their true skill values as $0$. In the spirit of the distributed computing literature, the identity of the byzantine agents is not random but instead chosen by an adversary aiming to minimize the value of the chosen team. Can we still select a team with good guarantees in this adversarial setting? As it turns out, deterministically, it remains optimal to select agents with the highest $\ell$ values. Yet, if $t \geq \ell$, the adversary can choose to make all selected agents byzantine, leading to a team of value zero. To provide meaningful guarantees, one hence needs to allow for randomization, in which case the expected value of the selected team needs to be maximized, assuming again that the adversary plays to minimize it. For this case, we provide linear-time randomized algorithms that maximize the expected value of the selected team. 
\end{abstract}
\keywords{Game Theory; Byzantine Fault Tolerance; Robust Optimization; Algorithms; Mechanism Design}
\begin{document}

\pagestyle{fancy}
\fancyhead{}

\maketitle 


\section{Introduction}

\textit{The Art of War} is an ancient Chinese military treatise attributed to Sun Tzu.
It is widely considered the first known origin of game theory. Sun Tzu advocates misleading the enemy through deception, discusses asymmetric information setups and even zero-sum games.

While Sun Tzu's discussions were qualitative only, around two millennia later, in the 16th and 17th centuries, the first foundations for analyzing games of chance were laid out.
John von Neumann eventually turned game theory into a rigorous academic subject,
culminating with his celebrated 1944 book \textit{Theory of Games and Economic Behavior} co-authored with Oskar Morgenstern.

Ever since, the agents participating in the game have been typically considered to be strategic, selfish, egoistic, or sometimes altruistic, but they would essentially always follow the rules of the game. In this paper, we 
take game theory back to Sun Tzu's origins of lies and deception. We believe this take on game theory opens up a whole class of exciting problems that have remained largely unexplored. 

Let us introduce our paper with a concrete puzzle.
A game show host puts four boxes in front of you, each with a value written on top: $8, 7, 5$ and $4$. Each box contains its advertised amount of value, but there is a catch: one box is a lie and contains nothing! You have to choose one box, and you will receive the money inside. 
What is your strategy to win the maximum possible amount of money? 

You 
quickly understand that there is no way to guarantee winning any money:
no matter which box you pick, it could be empty. However, 
what if you flipped a fair coin and chose the first box (with value 8) if the coin landed on heads and the second box (with value 7) if it landed on tails? At most one of those boxes can be empty, so with probability at least 50\%, you do not choose the empty box, so in expectation you earn at least $\min(8,7)/2 = 3.5$.
But can you do even better? Before reading on, we would like to encourage you to think about this puzzle for a moment.

This problem can be generalized to $n$ boxes promising monetary rewards $v_1, \dots, v_n$. At most $t < n$ boxes are a lie, and one has to choose $\ell < n$ boxes, winning the total inside these $\ell$ boxes. While we presented the problem as a game show puzzle, one can think of real-life situations where the same setup emerges: say you are running an auction for $\ell$ identical perishable items (e.g., food close to the expiration date, seat upgrades to first class on tomorrow's flight, leftover concert tickets). People place $n$
bids $v_1, \dots, v_n$, and you want to maximize your revenue. For transparency reasons, you want to publish the auction mechanism prior to the auction, and you worry that doing so might lead to the emergence of malicious bidders that bid adversarially to win the auction and then walk away after winning an item, leaving you with unsold items. Provided no binding terms enforce honoring the winning bids, your best course of action is to design your mechanism to account for the presence of a certain threshold $t$ on the number of malicious bidders.

Another example where the problem occurs is when 
we have to choose 
$\ell$ individuals out of $n$ 
applicants, like when putting together a university chess team. Again, up to $t$ applicants could be imposters, bragging about their chess strength (the number on their box) despite barely knowing the rules of chess (the box is empty).

Let us now return to our example with four boxes with values $\mathbf{v} = (8, 7, 5, 4)$. The first thought following the fair-coin strategy would be to use a biased coin to balance the expectations from the two cases: choose box 1 with probability $p_1$ and box 2 with probability $p_2 = 1 - p_1$. Then, if the first box is empty, we get an expectation of $7 \cdot p_2$, while if the second box is empty, we get an expectation of $8 \cdot p_1$. Setting the two products to be equal gives us the solution $\mathbf{p} = (\frac{7}{15}, \frac{8}{15})$, giving in both cases an expectation of $\frac{7 \cdot 8}{15} \approx 3.73$, which is better than our previous expected value of $3.5$. Going further, one would think of generalizing the idea to randomize between all four boxes: choose each box $i \in \{1, \dots, 4\}$ with probability $p_i$ such that $\sum_{i = 1}^4 p_i = 1$ and $v_1 \cdot p_1 = \dots = v_4 \cdot p_4$. Solving the ensuing equations yields the solution $\mathbf{p} = (\frac{35}{201}, \frac{40}{201}, \frac{56}{201}, \frac{70}{201})$. In all four cases (i.e., one for each possibility of which box is empty), the incurred expectation is $(4 - 1) \cdot \frac{280}{201} \approx 4.18$, which is better than our last solution --- but is this the best possible? Intuitively, one could think that the more boxes we consider, the better our expectation will be. 
However, this turns out not to be the case. In particular, ignoring box 4 is a good idea: suppose we require that $\sum_{i = 1}^3 p_i = 1$ and $v_1 \cdot p_1 = \dots = v_3 \cdot p_3$. Solving the equations yields $\mathbf{p} = (\frac{35}{131}, \frac{40}{131}, \frac{56}{131}, 0)$, incurring in all three cases (depending on which box out of the first three is empty) an expectation of $(3 - 1) \cdot \frac{280}{131} \approx 4.27$, which turns out to be the unique optimal solution.

\textbf{The Problem.} More formally, the boxes problem can be formulated as follows: $n$ boxes indexed by $[n] := \{1, \dots, n\}$ are given with promised monetary amounts $\mathbf{v} = (v_1, \dots, v_n)$ written on them. Without loss of generality, assume $v_1 \geq \dots \geq v_n > 0$.\footnote{Boxes $i$ with $v_i = 0$ can be safely ignored. This removes edge cases in the analysis.} It is known that $t < n$ of them are empty,\footnote{Alternatively, at most $t$ are empty. This does not change the analysis.} termed \emph{byzantine}, following tradition in the distributed computing literature; the other $n-t$ contain the advertised amounts. No prior over which boxes are byzantine is provided other than the fact that there are $t$ of them. The goal is to select $\ell < n$ boxes to open so as to maximize the worst-case expected total amount of money in the selected boxes. Formally, a randomized algorithm is sought that samples from a probability distribution $p$ over size-$\ell$ subsets of $[n]$ such that the quantity $\val{p}$ is maximized, which is defined as the minimum over all possibilities for the size-$t$ set $B \subseteq [n]$ of byzantine boxes of the expectation incurred by $p$ assuming boxes in $B$ are empty and the other boxes have the advertised values. See \cref{sect:prelims} for a fully formal definition and further discussion of the model.

\textbf{Our Contribution.} We give linear-time algorithms for sampling from a distribution $p$ maximizing $\val{p}$.\footnote{The underlying distribution may have exponentially-large support, so outputting it explicitly may not be possible. We also show that, at the cost of an extra factor of $n$ in the time complexity, an explicit $p$ (with linearly-sized support) can be computed.} We start by doing this for the case $\ell = 1$, for which we show that the approach we used to solve our four-boxes example is indeed correct in general: a $\val{p}$-maximizing distribution $p$ always exists such that for some $t + 1 \leq i \leq n$ we have $v_1 \cdot p_1 = \dots = v_i \cdot p_i$ and $p_{i + 1} = \dots = p_n = 0$. For a fixed $i$, there exists a single such distribution, given by setting $p_1, \dots, p_i$ proportionally to $\frac{1}{v_1}, \dots \frac{1}{v_i}$. Finding the $i$ maximizing the expectation can then be easily achieved in linear time. To show that an optimal solution maximizing $\val{p}$ has this shape, we combine an exchange argument with combinatorial reasoning about the dual of a linear program (LP). Afterwards, we move on the the general-$\ell$ case, for which such reasoning no longer suffices. Instead, we combine multiple techniques to recover a linear-time algorithm for this case as well. First, we use prior results on randomized rounding to show that it suffices to reason about the marginals of $p$ along the $n$ boxes. Then, we use a succession of exchange arguments to prove increasingly more refined forms of an optimal solution. The final form can be elegantly interpreted through a visual metaphor: pouring $\ell$ units of water into $n$ unit-volume vessels. Using this metaphor, we devise a linear-time two-pointer approach that iterates through possible solutions in a principled manner and returns an optimal one. A technical overview of our results and techniques can be found in \cref{sect:technical-overview}.

\section{Related Work}

Our paper brings together multiple fields often studied separately. In the following, we outline numerous connections between byzantine elements (and, more specifically, our selection problem) and other areas studying related notions and problems. Outlining these connections can lead to fruitful interdisciplinary future insights.

\noindent \textbf{Fault-Tolerant Distributed Computing.} The fault-tolerant distributed computing literature often considers a setup with $n$ parties, out of which at most $t$ are corrupted and may exhibit arbitrary deviations from the intended behavior. Following the tradition of the field, such parties are called \emph{byzantine}, while the other (at least) $n - t$ parties are referred to as \emph{honest} and will follow the intended behavior. One of the most prominently-studied problems is binary \emph{Byzantine Agreement} (BA), where each party $i$ has an input $x_i \in \{0, 1\}$, and must eventually provide an output $y_i \in \{0, 1\}$, such that any two honest parties give the same output (the \emph{agreement} condition) and this output is moreover the input of some honest party (the \emph{validity} condition). 
Depending on additional assumptions (e.g., synchronous communication, cryptographic setup, randomization), the difficulty of designing a correct BA protocol varies, with protocols resilient against $t < n/2$ corruptions existing in certain settings.
The core challenge here lies in the global lack of trust: nobody can be sure who are the corrupted parties, and hence protocols resilient against even a small number of corruptions can be highly non-trivial.
In contrast, assuming the existence of a \emph{trusted third party (TTP)} that is guaranteed to be honest makes BA essentially trivial: all parties send their $x_i$'s to the TTP, which takes a majority vote and then sends back the outcome to all parties. Consequently, efforts in the literature boil down to simulating the behavior of the TTP in an untrusted setting. The keen reader might have already noted that the simple TTP protocol above fails to work past the $t < n/2$ corruption threshold (if too many parties are byzantine, they can outnumber the honest votes), and, in fact, no TTP protocol can surpass this bound, implying that $t < n/2$ is best possible in the distributed setting too.
Hence, from a fresh point of view, understanding the difficulty of problems in fault tolerance begins with understanding their difficulty in the TTP setting. While this was mostly straightforward for binary BA (and is hence rarely considered), the story becomes interesting again for agreement problems over other domains:

When the inputs are real numbers, \emph{Honest-Range Validity} \cite{aa_with_range_validity} requires that the agreed-upon value is between the smallest and largest honest inputs. A stronger requirement stands in variations of \emph{Median Validity} \cite{stolz,clocks,kth_element}, asking that the output is close to the median of the honest inputs. An orthogonal generalization of \emph{Honest-Range Validity} is when the inputs are $D$-dimensional real vectors, where \emph{Convex Validity} \cite{vaidya_garg_convex,convex_aa} requires the output vector to be in the convex hull of the honest input vectors. This has been further generalized to abstract convexity spaces, including graphs and lattices \cite{generalized_convexity_space_us,nowak_rybicki}. When the inputs are linear orders over a set of alternatives, also known as \emph{rankings} in social-choice theoretic terms, \cite{darya-byzantine-voting} considers \emph{Pareto Validity}, requiring that whenever all honest inputs rank $a$ above $b$, so does the output ranking. The same paper also studies a second problem, requiring an output ranking that is close to the \emph{Kemeny median} of the honest rankings. Another notion inspired by social-choice is that of \emph{Voting Validity} \cite{voting-validity}.

Note how all these problems are non-trivial even in the TTP setting: how should the TTP aggregate $n$ votes to yield a valid outcome when $t$ of the votes might be corrupted and should not be considered? The problem we study is precisely of this flavor: the TTP receives the box values $v_1, \dots, v_n$ and must output a set of boxes to open to minimize the loss from empty boxes\footnote{Technically, maximize the win from non-empty boxes (the sum of the two is non-constant, so there is a distinction).} (possibly using randomization). In general, recent results indicate that the difficulty of solving the ``centralized'' TTP problem often matches that of its decentralized version, at least when considering BA with non-trivial validity conditions \cite{pierre_civit_1,pierre_civit_2,constantinescu2024validitynetworkagnosticbyzantineagreement}. Hence, it is imperative to understand the difficulty of the centralized problem in more settings, and our paper sets out to do just this for our selection problem.

\noindent\textbf{Malice in Game Theory.} While most of game theory considers rational (selfish) agents, a few works consider the presence of malicious agents, who derive their utility from making the system perform poorly or from others' disutility. This has been particularly investigated in distributed settings, where considering malicious actors is the norm. An excellent survey of work at the interface of rational and malicious behavior can be found in \cite{the_price_of_malice}, which also introduces the \emph{Price of Malice}, quantifying the degradation of the performance of a system of rational agents with the introduction of malicious actors.\footnote{The notion is inspired by the celebrated \emph{Price of Anarchy}, which instead targets the difference between selfish and collaborating agents.} The same paper analyzes the price of malice in a virus inoculation game, and subsequent papers study it and other notions of malice in other classes of games \cite{roth_congestion, babaioff, chakrabarty2009effectmalicesocialoptimum}. Particularly relevant to the distributed computing field is the concept of \emph{BAR fault tolerance} \cite{aiyer}, which requires distributed protocols to withstand \textbf{B}yzantine, \textbf{A}ltruistic, and \textbf{R}ational (BAR) behavior. Specifically, they must tolerate a constant fraction of Byzantine agents, as is standard, while ensuring that the remaining agents --- who are rational --- have sufficient incentives to follow the protocol. Subsequent work adopted the notion and applied it in a variety of distributed settings, spanning theory and practice \cite{bar_2, bar_3, FlightPath}. A number of works define fault-tolerant solution concepts, e.g., fault-tolerant Nash Equilibria, and apply them to study the fault tolerance of various games \cite{eliaz, karakostas_viglas}. Also relevant are the appealing results in \cite{gradwohl}, which find that ``large'' games are naturally fault-tolerant. Malice can also manifest as \emph{spite}. In this context, \cite{brandt, MorganSteiglitzReis} examine auctions with spiteful agents, i.e., agents who derive utility from others' disutility. Conversely, several works consider altruistic agents (which, in the previous context, referred to agents who follow the protocol regardless of incentives, but this is not a strict requirement), modeled as agents for whom larger utilities of others translate to larger utilities for themselves \cite{hoefer_skopalik, meier_yvonne_stefan_roger}.

\noindent \textbf{Bribery in Voting.} The social choice literature considers election \emph{bribery} \cite{Faliszewski_Rothe_2016}: changing the outcome 
by bribing a subset of voters to change their ballots. The aggregation problem that the TTP faces in distributed computing is of a similar flavour: the byzantine votes can be thought of as bought votes, and the bound on the number of corruptions $t$ as the budget of the briber. However, there is a notable difference: in voting the aggregation mechanism is fixed and the aim is to quantify the damage that can be done for a given profile of votes given a budget $t$. In contrast, our goal is designing a robust aggregation mechanism that is aware of the presence of at most $t$ ill-intended votes (which should be rightfully nullified if ever identified, although no signals to this end exist in our setting).

\noindent \textbf{Stackelberg and Security Games.} Our problem can be seen as a \emph{zero-sum Stackelberg (maxi-min) game} where the leader commits to a (potentially randomized) strategy of picking $\ell$ boxes, and then the follower (i.e., the adversary), knowing the leader's choice, chooses $t$ boxes to nullify. The leader (follower) attempts to maximize (minimize) the expected sum of the $\ell$ selected boxes post-nullification.
Related is the class of \emph{security games}, which have received extensive attention in recent years \cite{security_survey}. In one variant \cite{security_multiple_attackers}, there are $n$ potential targets, an attacker, and a defender. The defender moves first, chooses $t$ targets to protect (using a potentially randomized strategy), and then the attacker chooses $\ell$ targets to attack, knowing the strategy of the defender. The utilities of the two players additively depend solely on attacked targets, with the goal of the attacker being to attack undefended targets and that of the defender being to protect attacked
targets, but the game is not necessarily zero-sum. The existing literature largely concerns the non-zero-sum case with $\ell = 1$ attacked targets. On the computational front, security games are largely amenable to techniques from combinatorial optimization \cite{security_versus_combinatorial}, often linear programming \cite{conitzer_lp_security}. Results become scarcer when seeking more efficient algorithms: \cite{origami} for the $\ell = 1$ case, and \cite{security_multiple_attackers} for the general-$\ell$ case if instead of Stackelberg Equilibria we require Nash Equilibria, the former becoming computationally demanding. Our problem corresponds to a zero-sum variant of the previously described security game with the order the attacker and defender play in reversed. We note that this does not fundamentally change the game since for zero-sum games, the maxi-min and mini-max values coincide by von Neumann's theorem \cite{Neumann1928}. Stackleberg and Nash Equilibria are also closely tied in the zero-sum case, so the algorithm of \cite{security_multiple_attackers} can be used for our setting, but it is arguably more complicated than our approach (and quadratic instead of linear).

\noindent \textbf{Robust Combinatorial Optimization.} An active area of operations research concerns optimization under uncertainty~\cite{bertsimas2022robust}: when there is uncertainty in the constraints or objective. 
For our purposes, let us restrict ourselves to a problem template where only the objective is uncertain: the (not necessarily continuous) feasible region is known, denoted by $\mathcal{F} \subseteq \mathbb{R}^n$, and we are interested in $\max_{x \in \mathcal{F}} c^T x$. Instead of knowing $c$, we only know an uncertainty set $\mathcal{U}$ such that $c \in \mathcal{U}$. No probability distribution over $\mathcal{U}$ is supplied: we seek a solution $x$ that maximizes $c^T x$ in the worst-case, making for the \emph{maxi-min objective} $\max_{x \in \mathcal{F}} \min_{x \in \mathcal{U}} c^T x$. This setup is very flexible, as $\mathcal{F}$ can range from a polytope in the continuous case to the set of $s$-$t$ paths or spanning trees of a graph in the discrete one. The maxi-min objective can be replaced with regret-inspired variants. The uncertainty set $\mathcal{U}$ can take various shapes, with the two most prominent ones being \emph{discrete uncertainty}: $\mathcal{U}_D = \{c_1, \dots, c_k\}$ and \emph{interval uncertainty}: $\mathcal{U}_I = [a_1, b_1] \times \dots \times [a_n, b_n]$. Interval uncertainty admits a variant in the spirit of byzantine fault-tolerance, introduced in \cite{threshold_adversary_is_defined}: given a threshold $\Gamma$ define \emph{$\Gamma$-interval uncertainty} $\mathcal{U}_I^\Gamma$ such that $c \in \mathcal{U}_I^\Gamma$ if $c \in \mathcal{U}_I$ and $|\{i : c_i \neq b_i\}| \leq \Gamma$. See \cite{robust_survey1, robust_survey2, robust_survey3_algorithmic_design, robust_survey4, robust_survey5_practical} for excellent surveys of results and techniques. One of the basic cases considered in the robust optimization literature concerns $\mathcal{F} = \{x \in \{0, 1\}^n : \sum_{i = 1}^n x_i = \ell\}$, the so-called \emph{Selection Problem}: see \cite{robust_survey2} for a compilation of results under various objective and uncertainty set assumptions. Of particular interest to us is the case with the normal maxi-min objective and $\Gamma=t$-interval uncertainty for $\mathcal{U}_I = [0, v_1] \times \dots \times [0, v_n]$: this corresponds exactly with choosing to open $\ell$ boxes, out of which the adversary can nullify $t$. Note, however, that this only models the deterministic part of our paper (which is straightforward): it cannot model committing to a randomized strategy of which $\ell$ boxes to open to which the adversary replies by nullifying $t$ so as to minimize the expectation. One of the few papers considering such randomized strategies is \cite{selection_minmax_regret_randomized}, where the authors show that for discrete and interval uncertainty sets, it is possible to optimize the regret objective in polynomial time as long as (non-robust) optimization over $\mathcal{F}$ is feasible polynomially (which trivially holds for the Selection Problem). Note that their result does not target $\Gamma$-interval uncertainty and is for the regret objective, hence not applicable to us. 
The paper \cite{selection_minmax_regret_randomized} also refers to an unpublished (and not publicly available) 2012 manuscript of Bertsimas, Nasrabadi, and Orlin, titled ``On the power of nature in robust discrete optimization,'' claiming a similar result for our maxi-min objective, this time more generally applying to any pair $(\mathcal{F}, \mathcal{U})$ such that (non-robust) optimization over both $\mathcal{F}$ and $\mathcal{U}$ is feasible polynomially (which is the case in our setting). Modulo the fact that their paper cannot be reasonably retrieved, their result implies polynomial solvability for our problem (most likely using continuous optimization techniques like LP, and hence not strongly-polynomial). Our approach to the problem will be different, leading to a better, linear-time, algorithm.

\noindent \textbf{Statistical Learning with Adversarial Noise.} Learning the underlying distribution or statistics about a dataset is a fundamental problem in statistical learning. However, real-world data is often ill-behaved, including a fraction of adversarially corrupted/byzantine data. An extensive line of work has been dedicated to robust learning in this setting \cite{robust_mean_covariance_1, robust_geometric_concept_classes, robust_gaussians_1, robust_gaussians_2, learning_from_untrusted_data}, including but not limited to learning high-dimensional Gaussians with adversarial noise.

\noindent \textbf{Adversarial Bandits and Experts.} \emph{Bandit} and \emph{Expert Learning} are two related cornerstone models of decision-making under uncertainty, with applications ranging from machine learning to theoretical computer science and economics \cite{bandits1, bandits2, bandits3}. In the former, a decision-maker faces a slot machine with $n$ arms with unknown, possibly different, reward distributions. At each time-step, having observed past behavior, the decision-maker must select an arm to pull, gaining an instantaneous reward sampled from that arm's distribution independently of other time-steps.
The standard goal is regret minimization across a (possibly not fixed) time horizon $T$. Adversarial variants of the model have been explored, where instead of independent stochastic rewards, the rewards in each time slot are chosen by an adversary, bringing the setup closer to ours. Our problem can be seen as a single-shot ($T = 1$) variant of adversarial bandits. Here, each arm gives a known reward of either $0$ or $v_i$, and the number of zeros is limited by $t$ (the number of Byzantine arms), but it is not known which $t$ arms give zeros. The goal also changes from regret minimization to value maximization. Some work also explored models in-between stochastic and adversarial bandits \cite{stochastic_adversarial_bandits,stochastic_adversarial_bandits_2}, closer to our byzantine setting.
More broadly, there is interest in such in-between models \cite{stochastic_adversarial}.

\noindent \textbf{Secretary Problem.} In the classical \emph{Secretary Problem} \cite{who_solved_secretary, dynkin}, a stream of $n$ items arrive one by one in an online fashion. After each arrival, the algorithm should either commit to that item or discard it permanently. Each item has a value, and common goals include maximizing the probability of picking a highest-value item or maximizing the expected value of the picked item. A plethora of variations of the problem have been explored, including picking $\ell$ items instead of a single one \cite{k_secretary}, connecting the problem to our paper. In the standard secretary model, the adversary picks the values of the items knowing the algorithm, and then the items are presented to the algorithm in an order chosen uniformly at random (the \emph{random-order} model). Most developed algorithms are not robust to even small adversarial perturbations of the random order assumption. To study this effect, \cite{goran} introduces a semi-random model, where all values stay adversarial, but the arrival times of $t$ items are adversarial, chosen before the other $n - t$ items' arrival times are randomly generated. Like us, inspired by the distributed computing literature, they call this the \emph{Byzantine Secretary} model. We note that the problem in \cite{goran} is very different from ours: they consider an online setting where all values are adversarial, but the arrival times of only $t$ are adversarial, while our setting is offline, with all but $t$ of the values being honest. More subtly, given the values $v_1, \dots, v_n$, we want 
the best possible solution for a given instance, instead of the best achievable for a worst-case instance as in their setting (making our problem harder in this regard).

\noindent \textbf{Prophet Inequalities.} A commonly studied online stopping problem, related to the Secretary Problem, but also to Bandit Learning due to its online flavor, goes under the name of \emph{Prophet Inequalities} \cite{prophet}. The basic setup considers $n$ random variables with known distributions $X_1, \dots, X_n$. One by one, the realization of these random variables is revealed to the decision-maker, who has the option between committing to the current value or passing to the next round. The goal is to maximize the expectation of the selected value, and the standard result is that a policy achieving an expectation of at least half of the ex-post maximum value exists. Variants where $\ell$ values are to be selected have been studied, but once again, the combination of the stochastic and online aspects makes this setup very different from ours.

\section{Preliminaries}\label{sect:prelims}

\textbf{Sets and Distributions.} Given a non-negative integer $k,$ write $[k] := \{1, 2, \dots, a\}.$
Given a set $S$ and a non-negative integer $k$, we write $S^{(k)} := \{S' \subseteq S \mid |S'| = k\}$ for the set of $k$-element subsets of $S$. 
For finite $S$, we write $\Delta(S) := \{p : S \to [0, 1] \mid \sum_{s \in S}p_s = 1\}$ for the set of probability distributions over $S$.
For technical reasons, given a non-negative number $x$, we also define $\Delta_x(S) := \{p : S \to [0, 1] \mid \sum_{s \in S}p_s = x\},$ the set of what we will call \emph{pseudo-distributions} of sum $x$ over $S$. Note that $\Delta_1(S) = \Delta(S).$

\textbf{Setup.} We consider a setting with $n$ boxes indexed by the set $[n].$ Each box $i \in [n]$ has a real number $v_i > 0$ written on it.
For ease of presentation, except where stated otherwise, we will assume that $v_1 \geq \dots \geq v_n$. Each box $i$ may either be \emph{honest}, in which case it contains $v_i$ units of money, or \emph{byzantine}, in which case it contains no money, without loss of generality. Assume $B \subseteq [n]$ is the set of byzantine boxes, then $[n] \setminus B$ is the set of \emph{honest} boxes. Given the set of byzantine boxes $B$, the payoff incurred by opening box a box $i$ is 
$\pay{i, B} = v_i$ if $i \notin B$ and $\pay{i, B} = 0$ otherwise. The total payoff incurred by opening a set of boxes $S \subseteq [n]$ is defined additively as $\pay{S, B} := \sum_{i \in S}\pay{i, B}.$ 

\textbf{Problem Definition.} We are interested in designing a (for now deterministic) mechanism $M$ that, given $n$, the mapping $v : [n] \to \mathbb{R}^+$ 
and two numbers $1 \leq t, \ell < n$ select a size-$\ell$ subset of boxes $M(n, v, t, \ell) = S \subseteq [n]$ to open such that the worst-case payoff is maximized with respect to all options for the size-$t$ subset $B \subseteq [n]$ of byzantine boxes.\footnote{Note that requiring $|S| \leq \ell$ and $|B| \leq t$ instead of $|S| = \ell$ and $|B| = t$ would not change the problem.} 
Formally, we want to design a $M$ such that:
\begin{equation}\label{eq:goal-deterministic}
    M(n, v, t, \ell) \in \operatorname*{argmax}_{S \in [n]^{(\ell)}} \min_{B \in [n]^{(t)}}\pay{S, B}
\end{equation}

The right-hand side of \cref{eq:goal-deterministic} can be seen as a game between the mechanism and an adversary: the mechanism picks the set $S$ of boxes to open, then the adversary (knowing $S$) selects the set $B$ of byzantine boxes. This is a zero-sum game where the mechanism aims to maximize the payoff, while the adversary to minimize it.\footnote{In fact, \cref{eq:goal-deterministic} corresponds to the game-theoretic maxi-min solution concept.} 

Let us define $\val{S}:= \min_{B \in [n]^{(t)}}\pay{S, B}$, in which case \cref{eq:goal-deterministic} asks that the selected $S$ maximizes $\val{S}$. Our problem, as stated so far, is easy to solve:~from any set $S$ of selected boxes, the adversary will choose to nullify the largest $\min\{t, \ell\}$ values by including the corresponding boxes in $B.$ Hence, to maximize payoff, the mechanism should select boxes with the $\ell$ largest values, from which the largest $\min\{t, \ell\}$ will be nullified by the adversary. 

\begin{theorem} Among \textbf{deterministic} mechanisms, selecting $S^* = \{1, 2, \dots, \ell\}$ achieves the highest possible payoff:%
\begin{equation*}
    \max_{S \in [n]^{(\ell)}} \min_{B \in [n]^{(t)}}\pay{S, B} = 
    \val{S^*} = \left\{
\begin{array}{ll}
      \sum_{i = t + 1}^{\ell}v_i & t < \ell\\
      0 & t \geq \ell\\
\end{array} 
\right.
\end{equation*}
\end{theorem}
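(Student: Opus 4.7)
The plan is to split into the two cases $t \geq \ell$ and $t < \ell$, and in each case to first compute $\val{S}$ in closed form for an arbitrary size-$\ell$ set $S$ (by identifying the adversary's optimal response to the mechanism's commitment to $S$), and then to maximize the resulting expression over $S \in [n]^{(\ell)}$.

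The case $t \geq \ell$ is immediate: since $|B| = t \geq \ell = |S|$, the adversary can choose any $B \supseteq S$, which yields $\pay{S,B} = 0$, and this is clearly minimal since payoffs are non-negative. Hence $\val{S} = 0$ for every $S$, matching $\val{S^*} = 0$. For $t < \ell$, the key observation is that, for any fixed $S$, the adversary is free to choose which elements of $S$ to nullify (since $t < \ell < n$ leaves enough room outside $S$), and since payoff is additive with per-box contribution $v_i$ or $0$, the adversary optimally nullifies the $t$ largest-valued boxes in $S$. Writing $S = \{i_1 < i_2 < \dots < i_\ell\}$, the ordering $v_1 \geq \dots \geq v_n$ gives $v_{i_1} \geq \dots \geq v_{i_\ell}$, so the adversary picks $B \supseteq \{i_1, \dots, i_t\}$ and we obtain the clean expression $\val{S} = \sum_{k=t+1}^{\ell} v_{i_k}$.

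It remains to maximize this sum over $S$. Here I would use the elementary fact that, since $i_1 < i_2 < \dots < i_\ell$ are distinct positive integers, $i_k \geq k$ for every $k$, hence $v_{i_k} \leq v_k$ by monotonicity of $v$. Summing for $k = t+1, \dots, \ell$ yields $\val{S} \leq \sum_{k=t+1}^{\ell} v_k = \val{S^*}$, with equality attained by $S^* = \{1, \dots, \ell\}$.

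I do not anticipate a real obstacle: the argument is a direct consequence of the sortedness of $v$ and the fact that the inner minimization in $\min_B \pay{S,B}$ decouples across boxes. The only care needed is bookkeeping of the two subcases $t \geq \ell$ and $t < \ell$, and correctly translating ``the adversary nullifies the $t$ largest values of $S$'' into the formula indexed by the global positions $i_1, \dots, i_\ell$.
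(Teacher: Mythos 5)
Your proposal is correct and follows the same reasoning the paper gives informally in the paragraph preceding the theorem (the paper does not include a formal proof, only the observation that the adversary nullifies the $\min\{t,\ell\}$ largest values in $S$, whence selecting the top-$\ell$ boxes is optimal). Your write-up simply makes this rigorous via the standard $i_k \geq k$ rearrangement bound, which is exactly the right way to finish.
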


As an immediate consequence, deterministic mechanisms fail to provide any meaningful guarantees in the simplest case $t = \ell = 1$.  Naturally, the next step is to allow for randomization, in which case we need to discuss the power of the adversary, i.e., how much they are allowed to know about the random decisions of the mechanism prior to selecting $B$. 
For instance, a \emph{strong adversary}\footnote{Also known as an \emph{adaptive offline adversary} in the context of online algorithms.} knows both the mechanism and the randomness before picking which boxes are byzantine.
Since knowing the randomness renders any randomized mechanism deterministic, randomization does not help against a strong adversary. On the other hand, an \emph{oblivious/weak adversary} has access to the mechanism, but not to the random bits used. Formally, the game played against an oblivious adversary proceeds as follows: the mechanism outputs a probability distribution over sets $S$ of $\ell$ boxes each, say $p \in \Delta\left([n]^{(\ell)}\right)$, then the adversary (knowing $p$) selects the set $B$ of at most $t$ byzantine boxes,\footnote{Since the adversary plays second, it does not help them to randomize their strategy.} and finally a set $S \sim p$ is sampled, incurring a payoff of $\pay{S, B}.$ Before the final set $S$ is sampled, the outcome of the game is a probability distribution over possible payoffs, allowing for the formulation of various optimization goals. Most prominently, one can optimize for the expected payoff, but this is not the only option: a risk-averse mechanism user might prefer an expected payoff of $99$ that guarantees a payoff of at least $50$ in all realizations of the randomness to an expected payoff of $100$ that leaves a positive probability of getting payoff $1$.
In the paper, we will assume the expectation objective and an oblivious adversary, unless stated otherwise. With these assumptions
we seek a randomized mechanism $M$ such that:%
\begin{equation}\label{eq:goal-randomized}
    M(n, v, t, \ell) \in \operatorname*{argmax}_{p \in \Delta\left([n]^{(\ell)}\right)} \min_{B \in [n]^{(t)}}\mathbb{E}_{S \sim p}[\pay{S, B}]
\end{equation}

As before, we define $\val{p} := \min_{B \in [n]^{(t)}}\mathbb{E}_{S \sim p}[\pay{S, B}]$, in which case \cref{eq:goal-randomized} asks that the selected $p$ maximizes $\val{p}$.

\section{Results and Technical Overview}\label{sect:technical-overview}
The case $\ell = 1$ allows for simpler arguments and characterizations of optimal solutions (and generally less machinery). To solve it, we will use a natural exchange argument to show that an optimal solution satisfying $v_1 \cdot p_1 \geq \dots \geq v_n \cdot p_n$ always exists. Restricting our search to such distributions $p$ is particularly useful because we then know that the worst-case set of byzantine boxes is $\{1, \dots, t\}.$
These observations let us cast the problem as a simple linear program (LP). 
Reasoning combinatorially about its dual then shows that an optimal solution taking a very elegant form always exists. In particular, this solution satisfies $v_1\cdot p_1 = \dots 
= v_i\cdot p_i$ and $p_{i + 1} = \dots = p_n = 0,$ for some $t + 1 \leq i \leq n.$ In other words, it is optimal to select a prefix of boxes $1, \dots, i$ of length at least $t + 1$ and put all the probability mass on it such that the expected values $v_j \cdot p_j$ are equal on the prefix. For a fixed $i$, the resulting distribution is unique and given by setting $\left(p_1, \dots, p_n\right)$ proportionally to $\left(\frac{1}{v_1}, \dots, \frac{1}{v_i}, 0, \dots, 0\right)$, in which case one can check that $\val{p} = \frac{i - t}{\sum_{j = 1}^i \frac{1}{v_j}}$. It is then easy to compute this value in linear time for all $t + 1 \leq i \leq n$ and take the maximum. We prove the required claims in section \cref{sect:ell-1}.

\begin{theorem}\label{th:main:ell-one} Assume $\ell = 1$ and define $p^i$ for $t + 1 \leq i \leq n$ to be the unique distribution such that $\left(p^i_1, \dots, p^i_n\right)$ is proportional to $\left(\frac{1}{v_1}, \dots, \frac{1}{v_i}, 0, \dots, 0\right)$, in which case $\val{p^i} = \frac{i - t}{\sum_{j = 1}^i \frac{1}{v_j}}$. Then, among distributions $p \in \Delta([n])$, the maximum $\val{p}$ is attained at one of $p^{t + 1}, \dots, p^n$, and we can determine which one in linear time.
\end{theorem}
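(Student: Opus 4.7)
My plan is to first reduce to distributions $p$ satisfying $v_1 p_1 \geq v_2 p_2 \geq \cdots \geq v_n p_n$ (call these \emph{sorted}). Under sortedness, the adversary's best response becomes $B^\star = \{1,\dots,t\}$, so $\val{p} = \sum_{i=t+1}^n v_i p_i$, and the resulting maximization can be cast as a linear program whose vertex solutions are precisely the distributions $p^i$ in the statement.

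\textbf{Reduction to sorted distributions.} Given any $p \in \Delta([n])$, I would let $q_1 \geq \cdots \geq q_n$ be the decreasing rearrangement of the multiset $\{v_k p_k\}_k$, so that $q_j = v_{\pi(j)} p_{\pi(j)}$ for some permutation $\pi$, and set $Z := \sum_{j=1}^n q_j / v_j$. Because $(q_j)_j$ is decreasing while $(1/v_j)_j$ is increasing, the rearrangement inequality gives $Z \leq \sum_j q_j (1/v_{\pi(j)}) = \sum_j p_{\pi(j)} = 1$. The rescaled vector $p'_i := q_i / (v_i Z)$ is then a sorted probability distribution, and since the adversary against any $p$ removes the top-$t$ of the products, we have
\begin{equation*}
    \val{p'} \;=\; \frac{q_{t+1} + \cdots + q_n}{Z} \;\geq\; q_{t+1} + \cdots + q_n \;=\; \val{p}.
\end{equation*}
Hence it suffices to optimize over sorted distributions.

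\textbf{LP reformulation and vertex solution.} For sorted $p$, the substitution $x_i := v_i p_i$ followed by $y_j := x_j - x_{j+1}$ (with $y_n := x_n$) turns the monotonicity constraints into $y_j \geq 0$, rewrites $\sum_i p_i = 1$ as $\sum_{j=1}^n S_j y_j = 1$ with $S_j := \sum_{i=1}^j 1/v_i$, and converts the objective $\sum_{i=t+1}^n v_i p_i$ into $\sum_{j=t+1}^n (j-t)\, y_j$ by swapping the order of summation. One is thus maximizing a non-negative linear functional over a scaled simplex, whose optimum lies at a vertex with a single nonzero coordinate $y_{i^\star} = 1/S_{i^\star}$, where $i^\star \in \{t+1,\dots,n\}$ maximizes the ratio $(j-t)/S_j$ (which vanishes for $j \leq t$). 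Unwinding the substitutions yields $p_k = 1/(v_k S_{i^\star})$ for $k \leq i^\star$ and $0$ otherwise, i.e., $p^{i^\star}$, with value $(i^\star - t)/S_{i^\star}$. Computing all $S_j$ via one prefix-sum pass and scanning for the best $i$ both take $O(n)$ time.

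\textbf{Main obstacle.} The delicate step is the reduction to sortedness: a pairwise swap $p_i \leftrightarrow p_j$ can in fact \emph{decrease} $\val{p}$, so a local exchange argument does not directly apply. The fix is the global construction above, sorting the product multiset $\{v_k p_k\}_k$ and then rescaling by $1/Z$, with the rearrangement inequality certifying $Z \leq 1$ so that rescaling can only help. Once sortedness is secured, the chain of substitutions in the LP step is essentially mechanical.
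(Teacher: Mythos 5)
Your proof is correct, and both of its main steps take a genuinely different route from the paper's. For the reduction to sorted distributions, the paper uses a local exchange argument: it picks an optimizer with a tie-breaking convention on equal $v_i$'s and, assuming some pair violates monotonicity, swaps the two \emph{products} $v_ip_i$ and $v_jp_j$ (not $p_i,p_j$), shows the total probability mass strictly drops, and rescales to contradict optimality. Your global sort-then-rescale, with the rearrangement inequality certifying $Z\leq 1$, achieves the same conclusion in one constructive shot, without the tie-breaking convention or proof by contradiction; the edge case $Z=0$ cannot arise because $v_k>0$ and $\sum_k p_k = 1$. For the LP step, the paper writes down the dual and reasons combinatorially about for which $T$ it is feasible (effectively a shortest-path/prefix-sum feasibility argument), then matches a primal point to it via weak duality. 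Your change of variables $x_i = v_ip_i$, $y_j = x_j - x_{j+1}$ converts the primal directly into maximizing a linear functional over the simplex $\{y\geq 0 : \sum_j S_jy_j=1\}$, and a vertex argument immediately pins down the $p^{i^\star}$ family; this is more elementary and avoids duality entirely. One remark worth recording: the paper's local-swap exchange is reused almost verbatim in the appendix for the general-$\ell$ pseudo-distribution setting precisely because the swap preserves the per-coordinate caps $p_k\leq 1$, whereas your global rescaling by $1/Z$ does not respect those caps and so would not transfer; likewise your simplex argument, like the paper's dual analysis, breaks for $\ell>1$ once $p_i\leq 1$ becomes an active constraint (the paper flags this in a footnote). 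So your route is cleaner for $\ell=1$ but less reusable downstream.
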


For the harder general case, it is no longer the case that a value-maximizing distribution of such an attractive shape exists. First and foremost, this is because we are now dealing with distributions over size-$\ell$ subsets of $[n]$. To overcome this first obstacle, linearity of expectation gives us that $\val{p}$ for some $p \in \Delta([n]^{(\ell)})$ is uniquely determined by the \emph{marginals} $p'_i := \mathbb{P}_{S \sim p'}(i \in S)$ for $i \in [n]$. Naturally, the marginals are between $0$ and $1$ and sum up to $\ell$ since the sampled set $S$ satisfies $|S| = \ell$, from which $p' \in \Delta_\ell([n])$. Note $p'$ is \emph{not} just a distribution scaled up by a factor of $\ell$ because of the constraint that $p'_i \in [0, 1]$ for $i \in [n]$. Still, it would be convenient to optimize directly over $p' \in \Delta_\ell([n])$, a much lower-dimensional object. Perhaps surprisingly, this is something that we will be able to do by invoking results on randomized rounding: for any $p' \in \Delta_\ell([n])$, there exists $p \in \Delta([n]^{(\ell)})$ such that $p'$ gives the marginals of $p$, i.e., $p'_i = \mathbb{P}_{S \sim p}(i \in S)$ for $i \in [n]$, and we can sample from such a $p$ in linear time.
Hence, our new goal will be to find a $p' \in \Delta_\ell([n])$ maximizing the appropriately redefined $\val{p'}$.

Having retaken the problem into the realm of combinatorial tractability, we then use a similar exchange argument to show that a value-maximizing $p'$ exists such that $v_1 \cdot p'_1 \geq \dots \geq v_n \cdot p'_n$. Past this point,  unfortunately, optimal solutions no longer exhibit the elegant form that we could prove by considering the dual in the $\ell = 1$ case.\footnote{We can still cast the problem as an LP and take its dual, but this requires adding the constraints $p'_i \leq 1$ to the primal since they no longer follow from the sum constraint, leading to a dual that is harder to analyze and no longer implies the required property.} To pass this hurdle, we take a different approach, proving a series of more refined forms of the previous inequality. The last of these can be attractively visualized through a physical metaphor involving pouring $\ell$ units of liquid into a histogram consisting of $n$ rectangular vessels where box $i$ corresponds to a $v_i \times (1 / v_i)$ vessel. The amount of water poured into each vessel $i$ corresponds to $p'_i,$ justifying why the rectangles were chosen to have area 1. The pseudo-distributions $p'$ that we will need to consider will be uniquely determined by the water level in the first container. This way, we will simulate in linear time (in a two-pointer fashion) the process of continuously decreasing the water level in the first container and find the moment in time where the value of the corresponding $p'$ is maximized. We note that this algorithm can, of course, also be used when $\ell = 1$, but the details are considerably trickier and do not fully exploit the structure present in that case. All these considerations are carried out in \cref{sect:ell-general}.

\begin{theorem}\label{th:main} Assume $\ell \geq 1$. We can sample from a distribution $p \in \Delta\left([n]^{(\ell)}\right)$ that maximizes $\val{p}$ in linear time. If required, an explicit such $p$ can be computed in $O(n^2)$ time. 
\end{theorem}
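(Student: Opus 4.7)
The plan is to follow the roadmap in the technical overview: reduce from distributions over size-$\ell$ subsets to marginal vectors, use a succession of exchange arguments to pin down a canonical ``water-pouring'' form of an optimum, and then recover it in linear time via a one-dimensional sweep.

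First, for $p \in \Delta([n]^{(\ell)})$, I define marginals $p'_i := \mathbb{P}_{S \sim p}(i \in S)$. By linearity of expectation, $\mathbb{E}_{S \sim p}[\pay{S, B}] = \sum_{i \notin B} v_i p'_i$, so $\val{p}$ depends only on $p'$. Clearly $p' \in \Delta_\ell([n])$ with $p'_i \in [0, 1]$, and I would cite a standard dependent-rounding routine (pipage / swap rounding) to assert that every such $p'$ arises as the marginal of some $p \in \Delta([n]^{(\ell)})$ from which we can sample in linear time. The problem reduces to maximizing $\val{p'} := \min_{B \in [n]^{(t)}} \sum_{i \notin B} v_i p'_i$ over $p' \in \Delta_\ell([n])$ with $p'_i \in [0, 1]$. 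A brief exchange argument then yields an optimum with $v_1 p'_1 \geq \cdots \geq v_n p'_n$: whenever two indices $i < j$ violate this, a small mass swap bringing $v_i p'_i$ and $v_j p'_j$ closer (while respecting the box constraints) can only weaken the adversary, who prefers to nullify indices with largest $v_k p'_k$. Under this monotonicity the adversary's optimum is $B^\star = \{1, \dots, t\}$, and the objective collapses to the linear $\sum_{i = t + 1}^{n} v_i p'_i$.

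The main technical step is to refine this into a canonical ``water-pouring'' form: an optimum exists with $p'_i = 1$ for $i < a$, $v_i p'_i = h$ for $a \leq i \leq b$, and $p'_i = 0$ for $i > b$, for some $1 \leq a \leq b \leq n$ and water level $h \geq 0$. I would prove this with two further exchange refinements. First, on the active band (indices where $0 < p'_i < 1$), the products $v_i p'_i$ must all be equal: otherwise, shifting $\varepsilon$ of mass from the extreme with the larger product to the extreme with the smaller product strictly increases $\sum_{i > t} v_i p'_i$ while preserving monotonicity, with a dedicated subcase for indices $\leq t$ (which contribute $0$ to the objective but still constrain monotonicity). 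Second, the support is contiguous and the saturated vessels form a prefix: any ``gap'' --- an empty vessel to the left of a non-empty one, or a non-saturated vessel to the left of a saturated one --- can be removed by moving mass leftward, using $v_i \geq v_j$ for $i < j$ to improve the objective. I expect this step to be the main obstacle, since each proposed exchange must simultaneously respect the monotonicity constraints, the $[0, 1]$ bounds, and the sum $\sum_i p'_i = \ell$, and strict improvement has to be checked across several subcases.

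Finally, the canonical form is parametrized by the single scalar $h$: as $h$ decreases continuously, the cutoffs $a$ and $b$ evolve monotonically under the pressure of $\sum_i p'_i = \ell$, with vessels entering the saturated prefix or the active band at discrete events. This is the standard setup for a two-pointer sweep with $O(n)$ events; on each inter-event interval both the sum and the value are explicit rational functions of $h$, so the locally optimal $h$ and its value can be read off in $O(1)$ using precomputed prefix sums of $1/v_i$ and $v_i$. Sampling from a $p$ with marginals $p'$ via the cited rounding routine takes a further $O(n)$ time, giving the linear-time bound. For an explicit $p$, the rounding procedure outputs a distribution with support of size $O(n)$, each atom a size-$\ell$ subset, for a total of $O(n \ell) = O(n^2)$ time and space.
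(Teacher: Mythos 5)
Your reduction to marginals, the appeal to dependent rounding, the first exchange giving $v_1 p'_1 \geq \dots \geq v_n p'_n$, and the idea of a one-parameter sweep over a canonical family all match the paper. The genuine gap is that the canonical form you posit is backwards, and the exchange refinements meant to establish it would not deliver it. You claim an optimum exists with $p'_i = 1$ for $i < a$, $v_i p'_i = h$ for $a \leq i \leq b$, and $p'_i = 0$ for $i > b$. Under $v_1 \geq \dots \geq v_n$ this saturates the \emph{highest}-value boxes, handing the adversary the first $t$ products $v_1, \dots, v_t$ to zero out. Take $n = 3$, $t = 1$, $\ell = 2$, $v = (3, 2, 1)$: the optimum is $p' = (2/3, 1, 1/3)$ with value $7/3$ (products $(2, 2, 1/3)$), and its only full box is the middle one; every vector of your form has value at most $2$ here (e.g., $(1, 1, 0)$ for $a = 3$, or $(1, 1/3, 2/3)$ for $a = 2$ with value $4/3$). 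The paper's ``nice'' form is the reverse segment order: a prefix at a common water level $E$ (necessarily containing all of boxes $1, \dots, t+1$, so they have equal products $E$, not equal $p'_i = 1$), then the full boxes (those with $v_k \leq E$), then at most one partial box, then zeros.

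The two exchange sketches also do not carry. Your claim that the products on the active band $\{i : 0 < p'_i < 1\}$ must all be equal is false at the optimum above (band $\{1, 3\}$, products $2$ and $1/3$), and the shift of $\varepsilon$ from the larger-product index $i$ to the smaller-product index $j > i$ changes $\sum_{k > t} v_k p'_k$ by $\varepsilon(v_j - v_i) \leq 0$ whenever both indices lie above $t$, so it gives no strict improvement (it helps only when $i \leq t < j$). Likewise, ``moving mass leftward'' to remove a non-saturated box $i$ to the left of a saturated box $j$ only helps when the destination index $i$ exceeds $t$; when $i \leq t$ it strictly decreases the objective, which is exactly what happens at $(2/3, 1, 1/3)$ if one tries to fill box $1$. (Your other gap type, an empty vessel left of a non-empty one, cannot arise at all once $v_1 p'_1 \geq \dots \geq v_n p'_n$ and all $v_i > 0$.) The paper instead first levels boxes $1, \dots, t + 1$ to a common product, freeing mass without touching the objective, and then pushes tail mass onto the first non-saturated index $i + 1 > t$, which has larger $v$ than the tail it comes from; that yields full boxes in the middle, not a full prefix. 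With the correct canonical form, your two-pointer sweep and your $O(n)$ sampling / $O(n^2)$ explicit-output accounting are sound.
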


\section{The Case $\ell = 1$}\label{sect:ell-1}

As a warm-up, to build intuition and avoid some notational burdens of the general case, we begin with the case $\ell = 1,$ where the goal is to select a single ``winner'' box with the highest value possible. In the presence of $t$ byzantine boxes, for randomized mechanisms, this means outputting a distribution $p \in \Delta([n])$ maximizing
\begin{equation}\label{eq:objective:ell1}
    \val{p} = \min_{B \in [n]^{(t)}}\mathbb{E}_{i \sim p}[\pay{i, B}] = \min_{B \in [n]^{(t)}} \sum_{i \notin B} v_i \cdot p_i = \sum_{i = t + 1}^{n} x_i 
\end{equation}

\noindent where $x_1, \dots, x_n$ are the values $v_1 \cdot p_1, \dots, v_n \cdot p_n$ ordered such that $x_1 \geq \dots \geq x_n$. The first equality follows by linearity of expectation, while the second holds because the best strategy for the adversary is to choose the byzantine boxes have the highest $t$ values $v_i \cdot p_i.$

\begin{lemma}\label{lemma:increasing:ell1} There exists a distribution $p \in \Delta([n])$ maximizing $\val{p}$ that satisfies $v_1 \cdot p_1 \geq \dots \geq v_n \cdot p_n.$
\end{lemma}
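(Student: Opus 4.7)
The plan is to start from an arbitrary $\val$-maximizer $p^*$ and produce a new $\val$-maximizer $q'$ for which the products $v_k q'_k$ are non-increasing in $k$, by globally rearranging the products $v_k p^*_k$ into sorted order. Concretely, let $x_{(1)} \geq x_{(2)} \geq \dots \geq x_{(n)}$ denote the values $v_1 p^*_1, \dots, v_n p^*_n$ sorted non-increasingly, and define $q_k := x_{(k)} / v_k$ for each $k$. By construction, $v_k q_k = x_{(k)}$ is non-increasing in $k$, delivering the desired monotonicity on $q$.

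The crux of the argument is to show that $q$ is a sub-distribution, i.e., $\sum_k q_k \leq 1$. The sequence $(x_{(k)})_k$ is non-increasing while $(1/v_k)_k$ is non-decreasing (as $v_1 \geq \dots \geq v_n > 0$), so pairing them in the same index is a worst-case pairing for the rearrangement inequality. Hence $\sum_k x_{(k)}/v_k$ is bounded above by the sum obtained from any alternative pairing, in particular by $\sum_k x_k / v_k = \sum_k v_k p^*_k / v_k = \sum_k p^*_k = 1$. Thus $\delta := 1 - \sum_k q_k \geq 0$.

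To repair $q$ into a genuine distribution, add the residual mass $\delta$ to index $1$: set $q'_1 := q_1 + \delta$ and $q'_k := q_k$ for $k \geq 2$. Monotonicity is preserved, since $v_1 q'_1 = x_{(1)} + v_1 \delta$ remains the largest product. For the value, the multiset $\{v_k q'_k\}_k$ coincides with $\{v_k p^*_k\}_k$ except its maximum has been bumped up by $v_1 \delta \geq 0$. By \cref{eq:objective:ell1}, $\val{p}$ is the total of this multiset minus the sum of its top $t$ entries; since $t \geq 1$, the maximum entry always sits inside the top-$t$ slice, so the increase in total sum is exactly cancelled by the equal increase in the top-$t$ sum, giving $\val{q'} = \val{p^*}$. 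Since $p^*$ was optimal, so is $q'$, and it satisfies the required monotonicity.

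The main subtlety is the rearrangement-inequality step, which lets us compare the ``oppositely sorted'' sum $\sum_k x_{(k)}/v_k$ to the ``original'' pairing that re-assembles $\sum_k p^*_k = 1$; everything else, including the cancellation at the end, is bookkeeping once one observes that bumping up the already-largest product cannot change the residual payoff left after the adversary's top-$t$ strike.
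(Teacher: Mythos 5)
Your proof is correct, and it takes a genuinely different route from the paper's. The paper proves the claim by contradiction via a \emph{local} exchange argument: assume an optimal $p$ (with a mild tie-breaking choice) has an inversion $v_i p_i < v_j p_j$ for some $i<j$, swap just those two products, show the new vector has strictly less total mass, and then rescale to strictly beat the optimum. You instead construct the desired witness \emph{directly} and \emph{globally}: sort the entire multiset of products $v_k p^*_k$ into non-increasing order $x_{(1)}\geq\dots\geq x_{(n)}$, set $q_k := x_{(k)}/v_k$, and observe that this opposite-order pairing of $(x_{(k)})$ with $(1/v_k)$ is exactly the minimizing pairing in the rearrangement inequality, hence $\sum_k q_k \leq \sum_k v_k p^*_k / v_k = 1$. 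The residual $\delta$ is then dumped onto index $1$, which preserves monotonicity (the maximum only grows) and, because $t\geq 1$, is exactly cancelled when the adversary strips the top-$t$ products, so $\val{q'} = \val{p^*}$. Both arguments hinge on the same underlying slack --- sorting the products against $1/v_k$ can only free up mass --- but yours packages it as a single clean application of the rearrangement inequality rather than a chain of pairwise swaps, and avoids both the tie-breaking selection at the start and the contradiction/rescaling step at the end. One small thing worth noting for context: the paper reuses its exchange argument almost verbatim for the general-$\ell$ case (\cref{lemma:increasing}), where pseudo-distributions additionally carry the constraint $p_k\leq 1$; your global sorting also survives there, but one must additionally check $q_k = x_{(k)}/v_k \leq 1$, which follows since the $k$-th largest of $\{v_j p^*_j\}$ is at most the $k$-th largest of $\{v_j\}$, namely $v_k$. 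For $\ell=1$ this is automatic, so your proof as written is complete.
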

\begin{proof} Among distributions $p$ maximizing $\val{p}$, consider one such that for any $i < j$ with $v_i = v_j$ we have $p_i \geq p_j$. We show that this $p$ has the required property. Assume the contrary, then for some $i < j$ we have $v_i \cdot p_i < v_j \cdot p_j$. Recall that $v_i \geq v_j$, and moreover, since in case $v_i = v_j$ the condition $v_i \cdot p_i < v_j \cdot p_j$ reduces to $p_i < p_j,$ which can not hold by our choice of $p,$ we have $v_i > v_j.$ Let us now construct $p' : [n] \to [0, 1]$ as follows:
\begin{equation*}
    p'_k = \left\{
\begin{array}{ll}
      \frac{v_j \cdot p_j}{v_i} & k = i \\
      \frac{v_i \cdot p_i}{v_j} & k = j \\
      p_k                  & k \notin \{i, j\}
\end{array} 
\right.
\end{equation*}
First, note that the values $v_1\cdot p_1, \dots, v_n \cdot p_n$ and $v_1\cdot p'_1, \dots, v_n \cdot p'_n$ coincide except for the $i$'th and $j$'th entries being swapped, so $\val{p} = \val{p'}$.
Furthermore, it turns out that $p'$ does not use the whole available probability mass:
\begin{gather*}
    \sum_{k = 1}^{n}p'_k = p'_i + p'_j + \sum_{\substack{k = 1\\ k \notin\{i, j\}}}^{n}p_k = p'_i + p'_j + 1 - p_i - p_j = \\
    \frac{v_j \cdot p_j}{v_i} + \frac{v_i \cdot p_i}{v_j} + 1 - p_i - p_j = 
    1 - (v_i - v_j)\left(\frac{v_j \cdot p_j - v_i \cdot p_i}{v_i \cdot v_j}\right) < 1
\end{gather*}
Where the last inequality holds because $v_i > v_j$ and $v_i \cdot p_i < v_j \cdot p_j$. Let $\alpha > 1$ be such that $\sum_{k = 1}^{n}p'_k = \frac{1}{\alpha}.$ Then, the distribution $p'' \in \Delta([n])$ given by $p''_k = \alpha \cdot p'_k$ satisfies $\val{p''} = \alpha \cdot \val{p} > \val{p}$, contradicting the optimality of $p$.
\end{proof}

With this lemma in place, let us restrict ourselves to distributions $p$ satisfying $v_1 \cdot p_1 \geq \dots \geq v_n \cdot p_n.$ For this case \cref{eq:objective:ell1} simply becomes $\val{p} = \sum_{i = t + 1}^{n} v_i \cdot p_i.$ With this observation, we now prove the following lemma, implying \cref{th:main:ell-one}:

\begin{lemma}\label{lemma:exact-shape-ell-one} There exists a distribution $p \in \Delta\left([n]\right)$ maximizing $\val{p}$ such that for some $i \geq t + 1$ we have $v_1 \cdot p_1 = \dots = v_i \cdot p_i$ and $p_{i + 1} = \dots = p_n = 0$. 
\end{lemma}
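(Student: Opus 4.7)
The plan is to recast the restricted problem (after invoking \cref{lemma:increasing:ell1}) as a linear program whose vertices automatically exhibit the claimed shape, via a convenient change of variables.

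First I would apply \cref{lemma:increasing:ell1} to restrict attention to distributions $p$ with $v_1 p_1 \geq \dots \geq v_n p_n$, under which $\val{p} = \sum_{j = t+1}^n v_j p_j$. Setting $y_j := v_j p_j$, the problem becomes the LP: maximize $\sum_{j = t+1}^n y_j$ subject to $\sum_{j=1}^n y_j/v_j = 1$ and $y_1 \geq \dots \geq y_n \geq 0$. The key step is then to substitute the ``gap'' variables $z_k := y_k - y_{k+1}$ (with $y_{n+1} := 0$), which bijects the monotone nonnegative cone in $y$ with the nonnegative orthant in $z$ via $y_j = \sum_{k \geq j} z_k$. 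Letting $W_k := \sum_{j=1}^k 1/v_j$, the constraint becomes $\sum_k W_k z_k = 1$, and swapping the order of summation rewrites the objective as $\sum_{k = t+1}^n (k - t)\, z_k$, since for $k \geq t+1$ the variable $z_k$ contributes to exactly $k-t$ of the $y_j$'s in the objective, and to none for $k \leq t$.

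At this point we face an LP with a single nontrivial equality constraint over $z \geq 0$, and a bounded feasible region ($z_k \leq 1/W_k$). Its optimum is therefore attained at a basic feasible solution with at most one positive coordinate; the candidate at coordinate $k$ is $z_k = 1/W_k$, yielding objective $(k-t)/W_k$ when $k \geq t+1$ and $0$ otherwise. Hence the optimum is at some $k^* \geq t+1$ maximizing $(k-t)/W_k$. Unwinding the substitutions, this yields $y_j = 1/W_{k^*}$ for $j \leq k^*$ and $y_j = 0$ for $j > k^*$, which means $v_1 p_1 = \dots = v_{k^*} p_{k^*}$ and $p_j = 0$ for $j > k^*$ --- exactly the lemma with $i = k^*$.

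The only non-routine step is invoking that a bounded single-equality-constraint LP attains its optimum at a single-positive-coordinate vertex (a standard ``fractional knapsack''--style observation). The remaining points are mechanical: checking that the substitutions preserve feasibility and value, swapping the double sum, and ruling out $k^* \leq t$ (any such $k$ has zero objective coefficient and is dominated by any $k \geq t+1$, which exists since $n > t$).
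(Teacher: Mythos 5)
Your proof is correct, and it reaches the lemma by a genuinely different route than the paper. The paper also reduces to the restricted LP via \cref{lemma:increasing:ell1}, but then takes the dual (with variables $y_1, \dots, y_{n-1}$ and $T$), rewrites the dual constraints as a telescoping system in $\delta_i := y_i - y_{i-1}$, analyzes dual feasibility combinatorially to extract the optimal value $T^* = \max_{t+1 \leq i \leq n}(i-t)/\sum_{j \leq i} 1/v_j$, and finally exhibits a primal solution of the required form that matches $T^*$ by weak duality. You instead stay in the primal throughout: the change of variables $y_j = v_j p_j$ followed by the gap variables $z_k = y_k - y_{k+1}$ maps the isotone constraint onto the nonnegative orthant, collapsing the problem to a single equality constraint $\sum_k W_k z_k = 1$ over $z \ge 0$; boundedness ($z_k \le 1/W_k$) then lets you invoke that a vertex of such a polytope has at most one positive coordinate, which immediately yields the claimed shape (and, incidentally, the same closed-form $T^*$). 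The two approaches use the same underlying differencing idea, but you apply it in the primal while the paper applies it in the dual; your argument is more elementary in that it avoids duality altogether and replaces it with the one-constraint-LP / fractional-knapsack observation. One small point worth spelling out if you write this up: the all-zero vector is infeasible (since $\sum_k W_k \cdot 0 = 0 \ne 1$), so the optimal vertex has exactly one positive coordinate, and since $n \ge t+1$ there is always a choice $k^* \ge t+1$ with strictly positive objective, ruling out $k^* \le t$ as you note.
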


\begin{proof} By \cref{lemma:increasing:ell1}, requiring that $v_1 \cdot p_1 \geq \dots \geq v_n \cdot p_n$ does not change the maximum achievable $\val{p}$, so let us restrict ourselves to this case. As a result, we know that $\val{p} = \sum_{i = t + 1}^n v_i \cdot p_i$, so a maximizing $p$ is a solution to the following linear program with $n$ variables and $2n$ constraints:%
\begin{equation*}
\begin{array}{lll}
\text{maximize}  & \sum_{i = t + 1}^n v_i \cdot p_i &\\
\text{subject to}& \ v_{i + 1} \cdot p_{i + 1} - v_i \cdot p_i \leq 0, & i = 1, \dots, n - 1 \\ 
                 & \ p_i \geq 0, & i = 1, \dots, n \\
                 & \ \displaystyle\sum_{i = 1}^{n} p_i = 1
\end{array}
\end{equation*}

Let us take its dual. For this purpose, introduce non-negative variables $y_1, \dots, y_{n - 1}$ corresponding to constraints of the first kind and a variable $T \in \mathbb{R}$ corresponding to the last constraint. For convenience, we define $y_0 = y_n = 0$ and $v'_1, \dots, v'_n$ such that $v'_i = v_i$ if $i \geq t + 1$ and $v'_i = 0$ otherwise. Then, the dual is:%
\begin{equation*}
\begin{array}{ll@{}ll}
\text{minimize}  & T & & \\
\text{subject to}& & T - v_i \cdot y_i + v_i \cdot y_{i - 1} \geq v'_i, & i = 1, \dots, n \\ 
                 & & y_i \geq 0, & i = 1, \dots, n - 1 \\
                 & & y_i = 0     & i \in \{0, n\}
\end{array}
\end{equation*}

We want to find the optimum value of the dual. This amounts to understanding for which values of $T$ the dual is feasible. Let us fix a value $T$ and study the feasibility of the dual. 

Note that constraints of the first kind in the dual can be conveniently rewritten as:%
\[
    y_i - y_{i - 1} \leq \frac{T - v'_i}{v_i} = \frac{T}{v_i} - [i \geq t + 1]
\]

\noindent where $[i \geq t + 1] := 1$ if $i \geq t + 1$ and $0$ otherwise. Denote the right-hand side of the inequality with $c^T_i$, which is a constant in terms of the fixed $T$. Hence, we want to check the feasibility of:\footnote{The accustomed reader will notice the similarity with the linear program for distances in the graph $n \rightarrow \dots \rightarrow 0$.}
\begin{equation*}
\begin{array}{ll}
y_i - y_{i - 1} \leq c^T_i, & i = 1, \dots, n \\ 
y_i \geq 0, & i = 1, \dots, n - 1 \\
y_i = 0     & i \in \{0, n\}
\end{array}
\end{equation*}

It is natural to reinterpret this in terms of $\delta_i := y_i - y_{i - 1}$:
\begin{equation*}
\begin{array}{ll}
\delta_i \leq c^T_i, & i = 1, \dots, n \\ 
\sum_{j = 1}^i \delta_j \geq 0, & i = 1, \dots, n - 1 \\
\sum_{j = 1}^n \delta_j = 0     &
\end{array}
\end{equation*}

If we relax the last constraint to $\sum_{j = 1}^n \delta_j \geq 0$, this is clearly feasible if and only if setting $\delta_i = c^T_i$ satisfies the non-negativity constraints. Doing so might however lead to $\sum_{j = 1}^n \delta_j > 0$. If this is the case, it suffices to decrease $\delta_n$ to make the total equal 0, so this is not a problem.

As a result, a value of $T$ is achievable for the dual if and only if $\sum_{j = 1}^i c^T_j \geq 0$ for all $1 \leq i \leq n$. For a fixed $i$, this is equivalent to:
\begin{gather*}
   \sum_{j = 1}^i \frac{T}{v_j} - \sum_{j = 1}^i [j \geq t + 1] = T \cdot \sum_{j = 1}^i \frac{1}{v_j} - \max\{0, i - t\} \geq 0 \\ \iff 
   T \geq \frac{\max\{0, i - t\}}{\sum_{j = 1}^i \frac{1}{v_j}}
\end{gather*}

For $i \leq t$, this simplifies to $T \geq 0$, and otherwise it simplifies to a stricter inequality, so the case $i \leq t$ can be discarded from the condition. We hence get that the optimum value of the dual is:
\[
T^* = \max_{t + 1 \leq i \leq n}\frac{i - t}{\sum_{j = 1}^i \frac{1}{v_j}}
\]

We will now give a feasible solution $p^*$ to the primal with the property required in the statement of the lemma. We will show that $p^*$ achieves value $T^*$ implying, by weak duality, that it is optimal for the primal, concluding the proof.

The distribution $p^*$ is constructed as follows: for $j > i^*$, we set $p^*_j = 0$, while for $j \leq i^*$ we set $p^*_j = \frac{1}{v_j \cdot C}$, where $C := \sum_{k = 1}^{i^*}\frac{1}{v_k}$ is a normalizing factor. First, $p^*$ is clearly a well-defined distribution, as its entries sum up to 1 and are non-negative. Moreover, $v_j \cdot p_j$ is $\frac{1}{C}$ for $j \leq i^*$ and $0$ for $j > i^*$, implying that it is a feasible solution for the primal with the property required in the lemma statement. It remains to show that $p^*$ achieves a value of $T^*$ in the primal to get the optimality of $p^*$, completing the proof. This amounts to a simple algebraic verification.
\end{proof}

\section{The General-$\ell$ Case}\label{sect:ell-general}

The general case concerns value-maximizing distributions $p \in \Delta([n]^\ell)$, i.e., over size-$\ell$ subsets of $[n]$. 
For $\ell > 1$, these are combinatorially difficult to reason about. However, by linearity of expectation, we can write $\mathbb{E}_{S \sim p}[\pay{S, B}] = \sum_{i \notin B}v_i \cdot \mathbb{P}_{S \sim p}(i \in S),$ so much of the information contained in $p$ is redundant:~the expectation only depends on the \emph{marginals} $p_i' := \mathbb{P}_{S \sim p}(i \in S)$ for all $i \in [n].$ In particular, $\mathbb{E}_{S \sim p}[\pay{S, B}] = \sum_{i = 1}^{n}v_i \cdot p'_i$. Note that the marginals define a pseudo-distribution $p' \in \Delta_\ell([n])$: they are between $0$ and $1$ and sum up to $\ell$.\footnote{By linearity of expectation: $\sum_{i = 1}^{n}p'_i = \sum_{i = 1}^{n}\mathbb{P}_{S \sim p}(i \in S) = 
\mathbb{E}_{S \sim p}[|S|] = \ell$.} Trying to optimize for $p'$ directly seems particularly attractive given its lower-dimensional nature. In particular, instead of optimizing for $\val{p}$ among $p \in \Delta([n]^\ell)$, we can optimize for $\val{p'} := \min_{B \in [n]^{(t)}}\sum_{i = 1}^{n}v_i \cdot p'_i$ among $p' \in \Delta_\ell([n])$. While attractive, this approach could be prone to a serious pitfall: it could be that value-maximizing pseudo-distributions $p'$ cannot occur as the marginals of a true-distribution $p$. However, perhaps surprisingly, this is never the case: any pseudo-distribution $p'$  can be (efficiently) implemented through a true distribution $p$:

\begin{lemma}\label{lemma:rounding} Let $\ell \geq 1$ and $p' \in \Delta_\ell([n])$ be arbitrary.
Then, there exists a distribution $p \in \Delta\left([n]^{(\ell)}\right)$
such that $\mathbb{P}_{S \sim p}(i \in S) = p'_i$ for $i \in [n]$. Moreover, such a $p$ exists with support of size at most $n$ and can be computed in $O(n^2)$ time given $p'$. If computing $p$ explicitly is not required, we can sample from such a $p$ 
in $O(n)$ time per sample.
\end{lemma}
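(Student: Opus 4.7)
The plan is to invoke a classical systematic-sampling construction: convert the target marginals $p'$ into a partition of $[0, \ell]$ and then apply a single uniform random shift on $[0,1)$. This yields an $\ell$-subset with the correct marginals and correct size in one shot, and simultaneously gives both the linear-time sampling guarantee and the linear-support explicit decomposition.

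\textbf{Construction and correctness.} Define $S_0 := 0$ and $S_k := p'_1 + \cdots + p'_k$ for $k \in [n]$, so $S_n = \ell$ and consecutive $S_k$'s differ by $p'_k \leq 1$. These partition $(0, \ell]$ into half-open intervals $I_i := (S_{i-1}, S_i]$ of length $p'_i$. Draw $U$ uniformly from $[0,1)$ and output the set $S(U) := \{\, i \in [n] : (U + \{0, 1, \ldots, \ell - 1\}) \cap I_i \neq \emptyset\,\}$. Two quick checks give the existential part of the lemma. First, the $\ell$ query points $U, U + 1, \ldots, U + \ell - 1$ all lie in $[0, \ell)$, and since each $I_i$ has length at most $1$, each query point lands in exactly one interval and no interval is hit by two query points; hence $|S(U)| = \ell$ almost surely. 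Second, for fixed $i$, the set of $U \in [0, 1)$ with $i \in S(U)$ is the disjoint union $\bigsqcup_{k = 0}^{\ell - 1}\bigl((I_i - k) \cap [0, 1)\bigr)$, of total Lebesgue measure $|I_i| = p'_i$, so the induced distribution has the prescribed marginals.

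\textbf{Algorithms and main hurdle.} For linear-time sampling: precompute $(S_k)_{k = 0}^{n}$ in $O(n)$, draw $U$, and build $S(U)$ by a single parallel sweep through the sorted $S_k$'s and the (already sorted) points $U, U + 1, \ldots, U + \ell - 1$, costing $O(n)$ per sample. For the explicit decomposition: observe that $U \mapsto S(U)$ is piecewise constant on $[0, 1)$, changing only when some $U + k$ crosses some boundary $S_j$, i.e., at the fractional parts of $S_1, \ldots, S_{n - 1}$ together with $0$, yielding at most $n$ breakpoints. Sorting these and re-running the linear sweep once per constancy interval produces a decomposition with at most $n$ atoms, each of mass equal to the length of the corresponding arc, in $O(n^2)$ total time. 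The main hurdle I expect is locating this construction in the first place: a naive iterative pipage/swap-rounding scheme would produce the correct marginals but could blow up the support to $2^n$; noticing that a single uniform scalar $U$ suffices to couple all $\ell$ selections is the key non-routine insight that delivers all three guarantees simultaneously.
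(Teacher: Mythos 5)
Your proof is correct, and it takes a genuinely different route from the paper's. The paper proves the lemma by citation: the $O(n)$ sampling guarantee is obtained by invoking dependent randomized rounding on a star graph (Gandhi et al.), the $O(n^2)$-time, support-$\leq n$ explicit decomposition by invoking the \emph{AllocationFromShares} algorithm (Harris), and alternative existential arguments via Birkhoff--von Neumann and Carathéodory are noted. You instead give a self-contained construction: the classical systematic-sampling scheme, where the prefix sums $S_0, \dots, S_n$ partition $(0,\ell]$ into unit-capacity cells $I_i$ of length $p'_i$, a single uniform shift $U \in [0,1)$ generates $\ell$ evenly-spaced probes, and the output is the set of cells hit. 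Your verification is sound: the half-open convention plus $p'_i \leq 1$ rules out a cell being hit twice and, up to a null set, guarantees every probe is in exactly one cell, so $|S(U)| = \ell$ a.s.; and the event $\{i \in S(U)\}$ is a mod-$1$ wrap-around of a length-$p'_i$ interval, hence has measure exactly $p'_i$. The support bound and $O(n^2)$ runtime also go through since $S(U)$ is piecewise-constant with breakpoints only at the fractional parts of the $S_j$'s, of which there are at most $n$. The tradeoff is clear: the paper's version is shorter on the page because it outsources the work, whereas your argument is elementary, fully self-contained, and gives all three guarantees (existence, linear-support explicit decomposition, and linear-time sampling) from one mechanism. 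One minor quibble: rather than a ``non-routine insight,'' this single-uniform-shift coupling is a standard and well-studied device (systematic sampling in survey statistics, and closely related to pipage/dependent-rounding viewpoints); you could cite that lineage rather than present it as novel.
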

\begin{proof} The linear time per sample part follows from applying dependent randomized rounding \cite{gandhi_dependent_rounding} to a star graph with $n$ edges of weights $p_1', \dots, p_n'$ (even ignoring the dependent part). The quadratic explicit construction follows by applying the \emph{AllocationFromShares} algorithm in \cite{harris_randomized_rounding}. See also \cite{grimmett_rounding} and the informative discussion in \cite{suzuki_randomized_rounding}.

We note that other existential proofs have appeared (sometimes implicitly) in the literature, using tools like the Birkhoff-von Neumann theorem and Carathéodory's theorem \cite{conitzer_lp_security,selection_minmax_regret_randomized}. Given existence, a distribution can be computed \cite{selection_minmax_regret_randomized} by setting up an LP with exponentially-many variables and polynomially-many constraints and solving its dual using the ellipsoid method with a poly-time separation oracle \cite{grotschel_ellipsoid, schrijver_ellipsoid} (however, not in strongly polynomial time).
\end{proof}

As a result, our new goal is to give a linear-time algorithm outputting a pseudo-distribution $p' \in \Delta_\ell([n])$ that maximizes $\val{p'}$. The rest of the paper is dedicated to this task. Knowing $p'$, we can then apply \cref{lemma:rounding} to get our main result: \cref{th:main}.

\subsection{Value-Maximizing Pseudo-Distributions}

In order not to overburden the notation, henceforth, we drop the apostrophe and ask for a pseudo-distribution $p \in \Delta_\ell([n])$ that maximizes $\val{p}$. It will also be convenient to, without loss of generality, allow the elements in $p$ to sum to \emph{at most} $\ell$ instead of exactly $\ell$: we write $p \in \Delta_{\leq \ell}([n])$ from now on to signal this fact.

To achieve our goal, we would now like to show similar structural results to \cref{lemma:increasing:ell1,lemma:exact-shape-ell-one}. For \cref{lemma:increasing:ell1}, a similar argument shows that it suffices to consider pseudo-distributions satisfying $v_1 \cdot p_1 \geq \dots \geq v_n \cdot p_n$. However, an immediate analog of \cref{lemma:exact-shape-ell-one} is not possible. Instead, by going through an intermediary lemma, we show that it suffices to restrict ourselves to what we call \emph{nice pseudo-distributions}, which admit a nice geometric interpretation through a water-in-vessels analogy. 
We achieve our final goal by a careful, efficient simulation of continuously decreasing the water level in the first vessel. We begin with the following lemma, whose proof is very similar to that of \cref{lemma:increasing:ell1} and hence deferred to the appendix.

\begin{restatable}{lemma}{lemmaincreasing}\label{lemma:increasing} There exists a pseudo-distribution $p \in \Delta_{\leq \ell}([n])$ maximizing $\val{p}$ that satisfies $v_1 \cdot p_1 \geq \dots \geq v_n \cdot p_n.$
\end{restatable}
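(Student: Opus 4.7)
My plan is to follow the same approach as the proof of \cref{lemma:increasing:ell1}: pick an optimal $p$ with a convenient extremal property, assume there is an inversion $v_i p_i < v_j p_j$ for some $i < j$, perform the direct swap $p'_i = v_j p_j / v_i$, $p'_j = v_i p_i / v_j$, $p'_k = p_k$ for $k \notin \{i,j\}$, and derive a contradiction. All the basic calculations transplant verbatim: $p'_i \leq v_j/v_i \leq 1$ (using $p_j \leq 1$ and $v_j \leq v_i$), $p'_j = v_i p_i / v_j < v_j / v_j = 1$ (using $v_i p_i < v_j p_j \leq v_j$), the multiset $\{v_k p'_k\}_{k \in [n]}$ coincides with $\{v_k p_k\}_{k \in [n]}$ and hence $\val{p'} = \val{p}$, and an identical algebraic manipulation shows $\sum_k p'_k - \sum_k p_k = (v_j p_j - v_i p_i)(v_j - v_i)/(v_i v_j) < 0$, giving $\sum_k p'_k < \sum_k p_k$.

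The one substantive change is how I would extract the contradiction. In \cref{lemma:increasing:ell1}, rescaling $p'$ up to a distribution produces $\val{p''} = \alpha \val{p'} > \val{p}$, directly contradicting optimality of $p$. That step used implicitly that $\sum_k p_k = 1$ makes the bound $p_k \leq 1$ automatic, so scaling is always feasible. Here, scaling $p'$ by $\alpha > 1$ can violate $p_k \leq 1$ whenever some coordinate is already saturated at $1$, so the argument as stated does not carry over. I would instead commit at the start to choosing $p$ to \emph{additionally minimize} $\sum_k p_k$ among $\val$-maximizers (existence by compactness of the optimum set), with remaining ties broken by the same convention as in \cref{lemma:increasing:ell1} -- $p_i \geq p_j$ whenever $v_i = v_j$ and $i < j$, achievable by permuting coordinates at tied values without affecting $\val$ or the sum. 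The tie-breaking rule again forces $v_i > v_j$ strictly whenever an inversion is assumed, so the swap above is well-defined, and the resulting $p' \in \Delta_{\leq \ell}([n])$ is still an optimal pseudo-distribution but has strictly smaller $\sum_k p'_k$, directly contradicting the choice of $p$.

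The main obstacle is exactly the $p_k \leq 1$ constraint that, unlike in the distribution setting, no longer follows from the sum constraint. Replacing ``strict improvement via scaling'' with ``contradiction via minimum-sum extremality'' is the single-point modification needed to sidestep this obstacle; everything else carries over verbatim, which is why the appendix proof is described as very similar. I do not expect any hidden snag: the tie-breaking setup and the swap calculation behave identically to the $\ell=1$ case, and the only place where a new idea is required is the choice of which extremum to minimize over.
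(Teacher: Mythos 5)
Your proof is correct and follows essentially the same approach as the paper's: both replace the rescaling step from the $\ell=1$ proof with the extremal choice of a $\val$-maximizer $p$ that additionally minimizes $\sum_k p_k$, so that the swap $p'$ yields a strictly smaller sum and contradicts minimality. Your version is, if anything, slightly more careful in explicitly verifying $p'_i, p'_j \leq 1$ (which the paper glosses over with ``proceed as before''), and the different ordering of your two tie-breaking conditions is immaterial since permuting coordinates with equal $v$-values preserves both $\val$ and $\sum_k p_k$.
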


\begin{lemma}\label{lemma:increasing-with-equalities} There exists a pseudo-distribution $p \in \Delta_{\leq \ell}([n])$ maximizing $\val{p}$ satisfying $v_1 \cdot p_1 = \dots = v_{t + 1} \cdot p_{t + 1} \geq \dots \geq v_n \cdot p_n.$
\end{lemma}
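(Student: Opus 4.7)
The plan is to start from a value-maximizing $p \in \Delta_{\leq \ell}([n])$ satisfying the ordering $v_1 \cdot p_1 \geq \dots \geq v_n \cdot p_n$ guaranteed by \cref{lemma:increasing}, and then construct $p^*$ with the desired shape by \emph{shrinking} the probability mass on the first $t$ coordinates so that their weighted values exactly match $v_{t+1} \cdot p_{t+1}$. The motivating intuition is that the first $t$ coordinates are always nullified by the adversary, so shrinking their mass does not change the achieved value, as long as the reduction does not break the ordering enough to change the adversary's optimal response.

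Concretely, let $M := v_{t+1} \cdot p_{t+1}$ and define
\begin{equation*}
p^*_i = \begin{cases} M/v_i & \text{if } i \leq t, \\ p_i & \text{if } i \geq t+1. \end{cases}
\end{equation*}
Since $v_i \cdot p_i \geq v_{t+1} \cdot p_{t+1} = M$ for $i \leq t$, we have $p^*_i \leq p_i$ entrywise, so $p^* \in \Delta_{\leq \ell}([n])$ and $p^*_i \in [0,1]$. By construction $v_1 \cdot p^*_1 = \dots = v_{t+1} \cdot p^*_{t+1} = M$, and the remaining inequalities $v_{t+1} \cdot p^*_{t+1} \geq v_{t+2} \cdot p^*_{t+2} \geq \dots \geq v_n \cdot p^*_n$ are inherited from $p$.

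It remains to verify that $\val{p^*} = \val{p}$. For any size-$t$ set $B$, the adversary maximizes damage by nullifying the $t$ largest $v_i \cdot p^*_i$. Since the first $t+1$ coordinates all tie at $M$ and the rest are at most $M$, every such choice achieves damage exactly $tM$, giving
\begin{equation*}
\val{p^*} = \sum_{i=1}^n v_i \cdot p^*_i - tM = (t+1)M + \sum_{i=t+2}^n v_i \cdot p_i - tM = \sum_{i=t+1}^n v_i \cdot p_i = \val{p},
\end{equation*}
where the last equality uses that for the ordered $p$ the adversary's optimum is exactly to nullify $\{1, \dots, t\}$. Hence $p^*$ is also a maximizer and has the required form.

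There is no real obstacle in this argument; the only subtlety worth double-checking is the adversary's best response against $p^*$. Because the ordering $v_1 \cdot p^*_1 \geq \dots \geq v_n \cdot p^*_n$ is preserved (with the first $t+1$ values now tied), the top-$t$ characterization of the worst-case $B$ still applies, making the value computation above rigorous rather than merely suggestive.
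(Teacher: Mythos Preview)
Your proof is correct and follows essentially the same approach as the paper: both start from the ordered maximizer guaranteed by \cref{lemma:increasing} and shrink the first $t$ entries to $p^*_i = \frac{v_{t+1} p_{t+1}}{v_i}$, observing that this only reduces total mass and leaves $\val{\cdot}$ unchanged. Your write-up is slightly more explicit in verifying $\val{p^*} = \val{p}$ via the adversary's best response, but the construction and reasoning are identical.
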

\begin{proof} 
By \cref{lemma:increasing}, let $p$ be a maximizer of $\val{p}$ such that $v_1 \cdot p_1 \geq \dots \geq v_n \cdot p_n.$ If $p$ satisfies the required property, then we are done. Otherwise, construct $p' : [n] \to [0, 1]$ as follows:
\begin{equation*}
    p'_j = \left\{
    \begin{array}{ll}
          \frac{v_{t + 1} \cdot p_{t + 1}}{v_j} & j \leq t \\
          p_j & j > t \\
    \end{array}
    \right.
\end{equation*}
Note that for $j \leq t$ we have $p'_j = \frac{v_{t + 1} \cdot p_{t + 1}}{v_j} \leq \frac{v_j \cdot p_j}{v_j} = p_j,$ so $\sum_{j = 1}^{n}p'_j \leq \sum_{j = 1}^{n}p_j \leq \ell$, from which $p' \in \Delta_{\leq \ell}([n])$. Moreover, by construction, we have $\val{p'} = \val{p}$ and that $p'$ satisfies $v_1 \cdot p_1 = \dots = v_{t + 1} \cdot p_{t + 1} \geq \dots \geq v_n \cdot p_n$.
\end{proof}

\begin{figure}[t]
    \centering
    \begin{subfigure}[b]{1.00\linewidth}
    \centering
    \includegraphics[scale=0.30]{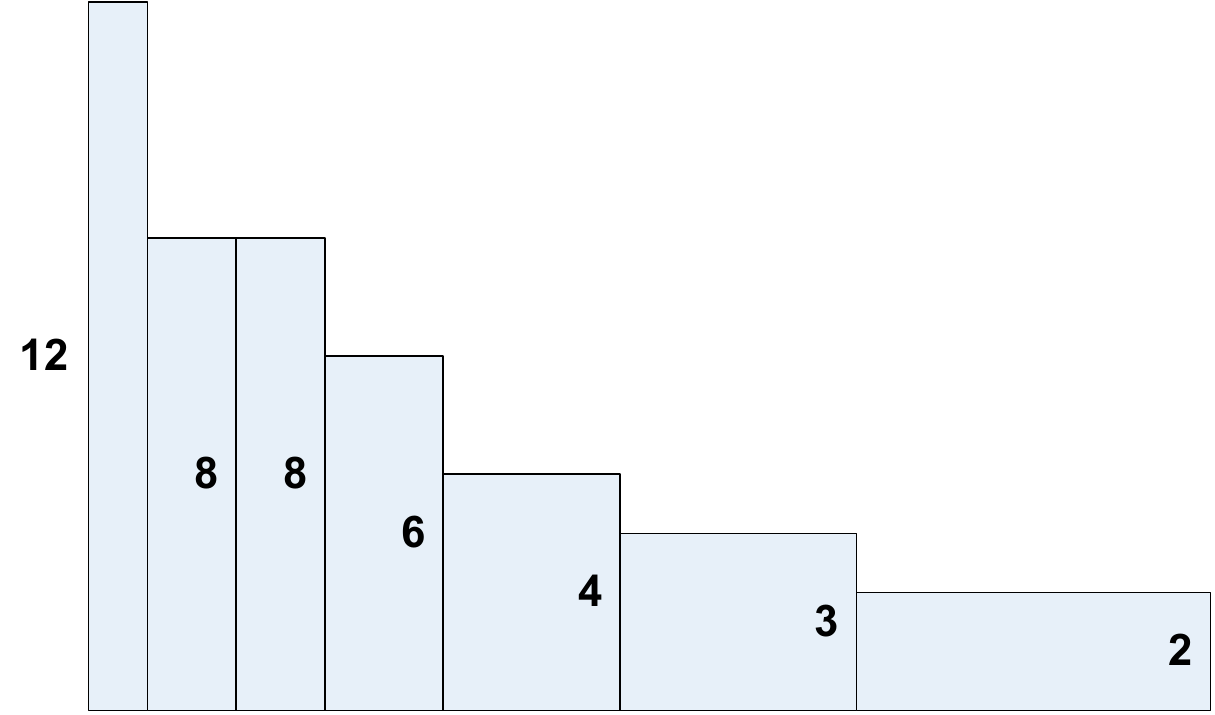}
    \caption{Example instance for $n = 7, t = 1, \ell = 5.$}
    \label{fig:water-level-1}
    \end{subfigure}
    
    \begin{subfigure}[b]{1.00\linewidth}
    \centering
    \hspace{0.34cm}
    \includegraphics[scale=0.30]{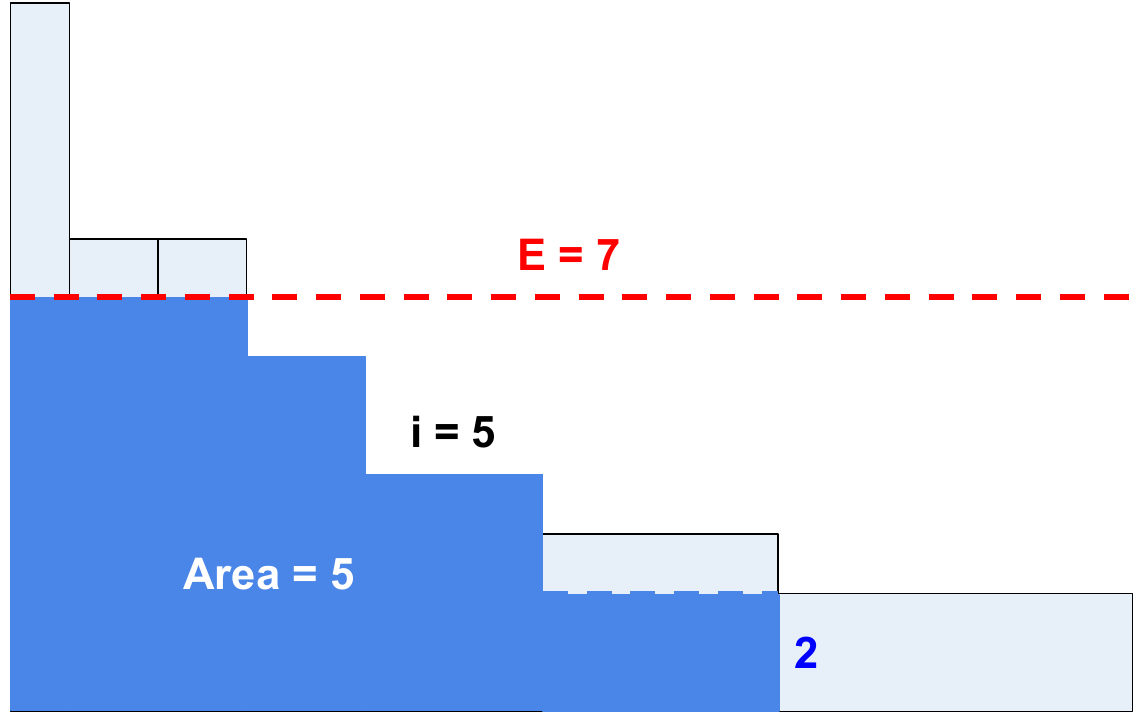}
    \caption{The maximal $7$-nice pseudo-distribution $p.$}
    \label{fig:water-level-2}    
    \end{subfigure}
    \caption{Consider an example with $n = 7, t = 1, \ell = 5$ and $\mathbf{v} = (12, 8, 8, 6, 4, 3, 2).$ This is depicted in \cref{fig:water-level-1} by rectangles with heights given by $\mathbf{v},$ each of area 1. One can understand pseudo-distributions $p$ using a water-filling metaphor:~each rectangle corresponds to a container comprising one unit of volume and each value $p_i \in [0, 1]$ corresponds to pouring $p_i$ units of water into the $i$-th container. By the choice of widths, pouring $p_i$ units of water into the $i$-th container makes the water rise to height $h_i := v_i \cdot p_i$ inside the container. Given $(t, \ell),$ a pseudo-distribution is $(E, i)$-nice if: (0) it uses at most $\ell$ units of water; (1) the water rises to level $E$ in the first $t + 1$ containers; (2) each container $k$ among the first $i$ is \emph{saturated} for $E$, i.e., either the water rises to height $E$ in $k$ or $E < v_k$ and $k$ is full; (3) container $i + 1$ (if it exists) is not saturated for $E$; (4) all subsequent containers are empty. Given $E$, there can exist at most one maximal $E$-nice pseudo-distribution: fill in the first $t + 1$ containers to level $E,$ if this exceeds the water budget $\ell,$ then no solution exists, otherwise continue in order through the next containers, saturating them until there is not enough water left to saturate current container. This is demonstrated for $E = 7$ in \cref{fig:water-level-2}: the first $3 \geq t + 1$ containers rise to level $7$, the next two containers have $p_4 = p_5 = 1$ (hence $i = 5$), the following container is not saturated: $p_6 = \frac{2}{3}$, and the last container is empty: $p_7 = 0.$ Overall, $\mathbf{p} = (\frac{7}{12}, \frac{7}{8}, \frac{7}{8}, 1, 1, \frac{2}{3}, 0),$ whose entries sum up to $\ell = 5.$}  
    \label{fig:water-level}
\end{figure}

\begin{definition} A pseudo-distribution $p \in \Delta_{\leq \ell}([n])$ is $(E, i)$-\emph{nice} for some $E \geq 0$ and $i \geq t + 1$ if the following conditions are satisfied:
\begin{enumerate}
    \item $v_k \cdot p_k = E$ for all $1 \leq k \leq t + 1$;
    \item $p_k = \min\{1, \frac{E}{v_k}\}$ for all $1 \leq k \leq i$;
    \item $p_{i + 1} < \min\{1, \frac{E}{v_{i + 1}}\}$;\footnote{Provided $i + 1 \leq n,$ otherwise we consider the last two conditions vacuously true.}
    \item $p_k = 0$ for all $i + 2 \leq k \leq n.$
\end{enumerate}
Moreover, $p$ is $E$-\emph{nice} if it is $(E, i)$-nice for some $i \geq t + 1$ and it is \emph{nice} if it is $E$-nice for some $E \geq 0$.

An $E$-nice pseudo-distribution $p \in \Delta_{\leq \ell}([n])$ is \emph{maximal} (for $E$) if there is no $E$-nice pseudo-distribution $p' \in \Delta_{\leq \ell}([n])$ such that $p'_i \geq p_i$ for all $1 \leq i \leq n$ and $p'_i > p_i$ for at least one $i.$
\end{definition}

The notion of nice pseudo-distributions might appear complex at first. However, it has an elegant geometric interpretation by a water-filling metaphor presented in \cref{fig:water-level} and its caption.

\begin{lemma}\label{lemma:full-characterization} There exists a pseudo-distribution $p \in \Delta_{\leq \ell}([n])$ maximizing $\val{p}$ that is nice.
\end{lemma}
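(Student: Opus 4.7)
The plan is to take the maximizer $p$ guaranteed by \cref{lemma:increasing-with-equalities}, let $E := v_1 p_1 = \dots = v_{t+1} p_{t+1}$ be the common water height on its first $t+1$ positions, and then reshape $p$ into a nice pseudo-distribution $q$ with $\val{q} \geq \val{p}$. Because $v_k p_k \leq E$ for every $k$ (with equality on the first $t+1$ positions), the adversary's best response to $p$ is to nullify any $t$ of those first $t+1$ positions, so $\val{p} = E + \sum_{k = t+2}^n v_k p_k$.

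I would construct $q$ as follows. First copy $p$ on the first $t+1$ coordinates: $q_k := E/v_k$ for $k \leq t+1$. Then, sweeping $k = t+2, t+3, \dots$ in order, greedily set $q_k := \min\{1, E/v_k\}$ as long as the running total stays at most $\ell$; at the first index $i+1$ where full saturation would overflow, put $q_{i+1}$ equal to the remaining budget (so that $\sum_k q_k = \ell$) and $q_k := 0$ for $k > i+1$. If the overflow never occurs, take $i = n$ (possibly with $\sum_k q_k < \ell$). The four conditions in the definition of $(E, i)$-niceness are then immediate from the construction, and $i \geq t+1$ because filling the first $t+1$ coordinates costs $\sum_{k \leq t+1} E/v_k = \sum_{k \leq t+1} p_k \leq \ell$.

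To see $\val{q} \geq \val{p}$: since $v_k q_k \leq E$ for all $k$ with equality on $k \leq t+1$, the adversary's best response to $q$ is again to nullify $t$ of the first $t+1$ positions, giving $\val{q} = E + \sum_{k > t+1} v_k q_k$. So I only need $\sum_{k > t+1} v_k q_k \geq \sum_{k > t+1} v_k p_k$. Both restrictions $(p_k)_{k > t+1}$ and $(q_k)_{k > t+1}$ satisfy the same feasibility constraints: coordinate-wise, $x_k \in [0, \min(1, E/v_k)]$ (for $p$ using $v_k p_k \leq E$ and $p_k \leq 1$), and in total $\sum_k x_k \leq \ell - \sum_{k \leq t+1} E/v_k$. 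Maximizing the linear objective $\sum_{k > t+1} v_k x_k$ over this box-and-budget region is a fractional knapsack with non-increasing coefficients $v_{t+2} \geq v_{t+3} \geq \dots$, whose optimum is obtained by greedy left-to-right saturation --- exactly how $q$ was built. Hence $\sum v_k q_k \geq \sum v_k p_k$, and optimality of $p$ then forces equality, so $q$ is the desired nice maximizer.

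The main difficulty I expect is this last step: one would naively hope for $q_k \geq p_k$ coordinate-wise, but that fails from the overflow index $i+1$ onwards, where $q$ may actually carry \emph{less} mass than $p$. What rescues the argument is a rearrangement effect --- $q$ shifts mass in $p$ from low-$v_k$ coordinates to high-$v_k$ ones --- which the fractional knapsack view makes transparent. Equivalently, one can package it via Abel summation on the prefix sums $f(j) := \sum_{k = t+2}^j (q_k - p_k)$, which remain non-negative for every $j$ because greedy saturation always prefers smaller indices; writing this out is routine but is the one place where some care is needed.
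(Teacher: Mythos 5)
Your proof is correct, and it takes a genuinely different route from the paper's. The paper argues by lexicographic extremality plus a local exchange: starting from the maximizer supplied by \cref{lemma:increasing-with-equalities}, it selects among all such maximizers one that lexicographically maximizes the pair $(i, p_{i+1})$, and then shows that if this extremal $p$ is not nice, moving $\varepsilon > 0$ mass from positions $i+2, \dots, n$ to position $i+1$ produces a feasible pseudo-distribution with at least the same value and a strictly larger pair, a contradiction. You instead build the nice candidate $q$ in one shot by greedy left-to-right saturation at level $E$, and prove $\val{q} \geq \val{p}$ by observing that, once the first $t+1$ coordinates are pinned to level $E$, the remaining problem over coordinates $k > t+1$ is a fractional knapsack with unit item sizes, box constraints $x_k \in [0, \min\{1, E/v_k\}]$, budget $\ell - \sum_{k\leq t+1} E/v_k$, and non-increasing profits $v_k$, for which left-to-right greedy is optimal. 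Both proofs share the same starting point and the same intuition (push probability mass toward larger $v_k$), but yours replaces the extremality-and-contradiction scaffolding with a direct construction and an appeal to greedy optimality, which also delivers an explicit nice maximizer rather than a non-constructive existence statement. The one place your sketch leans on routine-but-unwritten detail is exactly where you flag it: establishing $\sum_{k>t+1} v_k q_k \geq \sum_{k>t+1} v_k p_k$ either by citing fractional-knapsack greedy optimality or by the Abel-summation/prefix-sum argument with $f(j) = \sum_{k=t+2}^{j}(q_k - p_k) \geq 0$; that non-negativity does hold (for $j \leq i$ coordinate-wise since $q_k \geq p_k$ there, and for $j > i$ because $\sum_{k=t+2}^{j} q_k$ equals the full remaining budget which upper-bounds any prefix sum of $p$), so the step is sound.
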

\begin{proof}%
By \cref{lemma:increasing-with-equalities}, let $p$ be a maximizer of $\val{p}$ such that $v_1 \cdot p_1 = \dots = v_{t + 1} \cdot p_{t + 1} \geq \dots \geq v_n \cdot p_n$. Define $E := v_1 \cdot p_1$ and let $t + 1 \leq i \leq n$ be the largest index such that $p_i = \min\{1, \frac{E}{v_i}\}$. Among such pseudo-distributions $p$, further assume that we selected one maximizing the pair $(i, p_{i + 1})$ lexicographically, where we consider $p_{n + 1} := 0$ in order for this to always be well-defined. If $i = n,$ then $p$ is $(E, i)$-nice, so we are done, otherwise, assume $i + 1 \leq n,$ implying $p_{i + 1} < \min\{1, \frac{E}{v_{i + 1}}\}$, so $p$ satisfies the first three conditions for being $(E, i)$-nice. If the forth condition is also satisfied, then we are done. Otherwise, write $x := \sum_{k = i + 2}^n p_k,$ and observe that $x > 0$. Furthermore, note that $p_i \cdot v_i > p_{i + 1} \cdot v_{i + 1}.$ To see this, assume the contrary, i.e., $p_i \cdot v_i = p_{i + 1} \cdot v_{i + 1},$ from which $p_{i + 1} = \frac{v_i \cdot p_i}{v_{i + 1}} = \frac{v_i}{v_{i + 1}} \min\{1, \frac{E}{v_i}\} = \min\{\frac{v_i}{v_{i + 1}}, \frac{E}{v_{i + 1}}\} \geq \min\{1, \frac{E}{v_{i + 1}}\},$ a contradiction. Now, define $\varepsilon := \min\{x, \frac{v_i \cdot p_i}{v_{i + 1}} - p_{i + 1}\}$ and note that $\varepsilon > 0.$\footnote{This is because $x > 0$ and $p_i \cdot v_i > p_{i + 1} \cdot v_{i + 1}.$} Construct $p' : [n] \to [0, 1]$ as follows:%
\begin{equation*}
    p'_k = \left\{
    \begin{array}{ll}
          p_k & k \leq i \\
          p_k + \varepsilon & k = i + 1 \\
          p_k \cdot \frac{x - \varepsilon}{x} & k \geq i + 2
    \end{array}
    \right.
\end{equation*}

Let us first show that $p' \in \Delta_{\leq \ell}([n])$. For this, we need to show that all entries are non-negative and sum up to at most $\ell$. Non-negativity follows since $\frac{x - \varepsilon}{x} \geq 0,$ and the sum is a simple algebraic calculation.

Let us now show that $p'$ satisfies $v_1 \cdot p'_1 = \dots = v_{t + 1} \cdot p'_{t + 1} \geq \dots \geq v'_n \cdot p'_n.$ The equalities hold by construction of $p'$ from $p$ and because $i \geq t + 1.$ It then remains to show that for all $1 \leq k \leq n - 1$ we have $v_k \cdot p'_k \geq v_{k + 1} \cdot p'_{k + 1}.$ The cases $k < i$ and $k \geq i + 2$ are immediate by construction. The case $k = i + 1$ is also straightforward: $v_{i + 1} \cdot p'_{i + 1} \geq v_{i + 1} \cdot p_{i + 1} \geq v_{i + 2} \cdot p_{i + 2} \geq v_{i + 2} \cdot p'_{i + 2}.$ The final case $i = k$ amounts to $v_i \cdot p_i \geq v_{i + 1} \cdot (p_{i + 1} + \varepsilon),$ which holds by $\varepsilon$'s definition.

Finally, let us show that the existence of $p'$ is a contradiction. First, $\val{p'} \geq \val{p}.$ This is because $p'$ is constructed from $p$ by moving $\varepsilon$ probability mass from positions $i + 2, \dots, n$ to position $i + 1,$ and $v_{i + 1}$ is no smaller than $v_{i + 2}, \dots, v_n$ and hence $\val{p'} = \sum_{k = t + 1}^{n} v_k \cdot p'_k \geq \sum_{k = t + 1}^{n} v_k \cdot p_k = \val{p}.$ If $\val{p'} > \val{p}$ then this is already a contradiction, otherwise, $\val{p'} = \val{p}$ holds, but even in that case $p'$ has by construction a lexicographically strictly higher $(i, p'_{i + 1})$ pair, again a contradiction.
\end{proof}

Before proceeding further, we emphasize the observation made in the caption of \cref{fig:water-level}: Given $E$, there exists at most one maximal $E$-nice pseudo-distribution, obtained by the presented left-to-right water-filling argument. In fact, something even stronger holds: $E$-nice pseudo-distributions are linearly ordered by the amount of water used: the only way to create non-maximal $E$-nice pseudo-distributions is to follow the same water-filling argument but stop it early before it had a chance to use all possible water. For a fixed $E$, pseudo-distributions with more water can not lead to a worse value, so we can augment \cref{lemma:full-characterization} to get that it suffices to look at \emph{maximal} pseudo-distributions, which are uniquely determined by $E$ whenever they exist. Let us give the existence conditions in light of the water-filling argument: (i) $E \leq v_{t + 1},$ as otherwise $p_k = \frac{E}{v_k} > 1$ would hold for some $1 \leq k \leq t + 1,$ in particular for $k = t + 1$
and (ii) $\sum_{k = 1}^{t + 1}\frac{E}{v_k} \leq \ell \iff E \leq \ell \cdot (\sum_{k = 1}^{t + 1}\frac{1}{v_k})^{-1},$ since otherwise the $\ell$ units of water we have at our disposal are insufficient to make the water rise to level $E$ in the first $t + 1$ containers. Hence, (maximal) $E$-nice pseudo-distributions exist for $0 \leq E \leq E_\mathit{max} := \min\{v_{t + 1}, \ell \cdot (\sum_{k = 1}^{t + 1}\frac{1}{v_k})^{-1}\}$. It remains to identify $E \in [0, E_\mathit{max}]$ yielding a maximum-value $E$-nice pseudo-distribution. The main idea to do this in linear time is to note that the value of the maximal $E$-nice distribution is piece-wise linear in $E$. We prove this fact and give a linear-time algorithm that produces and iterates through the breakpoints in order from $E_\mathit{max}$ down to $0$, at the same time computing the value of the corresponding pseudo-distributions and outputting the best one at the end (the maximum has to happen at a break-point due to piece-wise linearity). The details can be found in the proof of the lemma below, which is rather involved and hence deferred to the appendix. To give a different view, our algorithm simulates the continuous process of decreasing the water level in the first container (i.e., $E$, starting at $E_\mathit{max}$), outputting the best achievable value along the way in overall linear time. 

\begin{restatable}{lemma}{continoussimulation}\label{lemma:continous-simulation} A pseudo-distribution $p \in \Delta_{\leq \ell}([n])$ maximizing the quantity $\val{p}$ can be computed in time $O(n)$.
\end{restatable}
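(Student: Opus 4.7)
My plan is to exploit the water-filling characterization to reduce the search for an optimal $p \in \Delta_{\leq \ell}([n])$ to a one-dimensional search over $E \in [0, E_{\mathit{max}}]$, and then show that the function $E \mapsto \val{p^E}$ (where $p^E$ is the maximal $E$-nice pseudo-distribution) is piecewise linear with only $O(n)$ breakpoints, so its maximum can be located by a single sweep. I parametrize $p^E$ by two indices: $j = j(E)$, the largest index with $v_j \geq E$ (so containers $1, \ldots, j$ are water-level-saturated at height $E$), and $i = i(E) \geq t+1$, the last container saturated by the left-to-right water-filling process. I distinguish two regimes: (a) $i \geq j$, where the first $j$ containers are at water-level, containers $j+1, \ldots, i$ are full, and container $i+1$ holds the leftover; (b) $i < j$, where the water ran out during the water-level phase, only containers $1, \ldots, i$ are saturated (all at water-level), and container $i+1$ is partially filled.

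In both regimes, I express $p^E_{i+1}$ as an explicit linear function of $E$ via the constraint $\sum_k p^E_k = \ell$. Substituting into $\val{p^E} = \sum_{k = t+1}^n v_k p^E_k$, a direct calculation shows that for fixed $(i, j)$ this quantity is affine in $E$. Because $\val{p^E}$ is continuous in $E$ and its formula only changes when $i$ or $j$ changes, the function is piecewise linear on $[0, E_{\mathit{max}}]$, and its maximum is attained at one of the breakpoints of $(i, j)$.

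The breakpoints come in two flavors. A $j$-event occurs at $E = v_{j+1}$, when a new container enters the water-level regime. An $i$-event occurs when the partial container $i+1$ becomes saturated; depending on whether $i+1 > j$ or $i+1 \leq j$, its $E$-value is obtained by equating the expression for $p^E_{i+1}$ with $1$ or with $E/v_{i+1}$ and solving a linear equation, yielding a closed form computable in $O(1)$ from the current state. Each event strictly increments $i$ or $j$, so the number of breakpoints is at most $2n$. My algorithm is then a two-pointer sweep starting from $E = E_{\mathit{max}}$: it maintains $(i, j)$, the prefix sum $\sum_{k=1}^j 1/v_k$, and the running sum $\sum_{k=j+1}^i v_k$; at each step it computes the $E$-values of the next $j$- and $i$-events in $O(1)$, advances to the larger of the two (the one reached first as $E$ decreases), updates the running sums incrementally, and evaluates $\val{p^E}$. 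Tracking the best value across all breakpoints yields the optimum $E^\ast$; the corresponding $p^{E^\ast}$ is reconstructed explicitly in $O(n)$, and \cref{lemma:rounding} converts it into a sample-able distribution.

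The main obstacle I foresee is the case analysis between the regimes $i \geq j$ and $i < j$ and the handling of transitions between them, including ties when $i$- and $j$-events share an $E$-value. Ensuring that the correct next-event formula is always applied and that the invariants $0 \leq p^E_{i+1} < \min(1, E/v_{i+1})$ are preserved across events requires care, but the underlying calculations are elementary. Once the bookkeeping is settled, the piecewise linearity of $\val{p^E}$ makes both the correctness of the sweep and its linear running time immediate.
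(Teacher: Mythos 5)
Your proposal takes essentially the same route as the paper: reduce to a one-dimensional sweep over the water level $E$, establish that $E \mapsto \val{p^E}$ is piecewise linear with at most $2n$ breakpoints tied to a pair of monotone indices (your $(i,j)$, the paper's $(i_E, k_E)$), perform a two-pointer sweep from $E_\mathit{max}$ downward, and finish via \cref{lemma:rounding}. The bookkeeping is arranged differently: the paper derives each slope by tracking how $\varepsilon$ units of water move in going from $p^E$ to $p^{E-\varepsilon}$, and recovers $i_E$ and $p^E_{i_E+1}$ from $(E,k)$ by floor and fractional-part formulas, whereas you write $\val{p^E}$ as a closed-form affine function of $E$ on each piece and carry explicit running sums. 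The genuine strength of your write-up is the explicit split into regimes $i \geq j$ and $i < j$. The latter regime, where the budget $\ell$ runs out while still inside the prefix of containers with $v_k \geq E$ so that $i_E < k_E$, really does occur — for instance just below $E_\mathit{max}$ when $E_\mathit{max} = \ell\bigl(\sum_{k \leq t+1} 1/v_k\bigr)^{-1} < v_{t+2}$ — and the paper's presented derivation treats the first $k_E$ containers as uniformly sitting at level $E$, which silently assumes $i_E \geq k_E$. Your regime handling is therefore a welcome refinement. You correctly flag regime transitions and tie-breaking as the remaining bookkeeping to verify; spelling those out carefully would complete a sound proof.
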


\section{Conclusion and Future Work}

We introduced the Byzantine Selection Problem and gave attractive linear-time algorithms sampling from a value-maximizing distribution. It would also be interesting to study the problem for other notions of optimality inspired by algorithm design under uncertainty, such as minimizing regret or maximizing the competitive ratio. Introducing an online element, where boxes arrive one at a time, like in the secretary problem, could also prove fruitful. Finally, our model followed the distributed computing literature and assumed a known threshold $t$ on the number of empty boxes (we assumed exactly $t$ purely for analysis purposes). In practice, such a threshold may not be available, in which case one can still use our algorithms with $t = n - 1$ (which might, however, be overly conservative). One can choose to work with a lower $t$ depending on their risk aversion. Studying how the optimum varies with $t$ for a fixed instance or assuming a prior distribution on $t$ could be promising avenues for future work.

More broadly, our paper argues for an exciting yet relatively unexplored research avenue: introducing byzantine elements into classic non-distributed settings, such as the centralized settings emerging in social choice theory. The recent work in \cite{constantinescu2025byzantinestablematching} brings this element to stable matchings. While their paper targets a distributed setting, one of their contributions is the development of a fault-tolerant notion of stable matchings. It would be appealing to conduct similar investigations in other application domains, such as preference aggregation or fair division. The paper \cite{eliaz} targets so-called \emph{fault-tolerant implementations}, and could also serve as a good starting point for further investigation in this direction when incentives are to be considered as well.

\begin{acks}
We thank Judith Beestermöller, Joël Mathys, Jannik Peters, Jiarui Gan, and Edith Elkind for the many productive discussions that helped shape the paper in its current form and for highlighting relevant literature. We are, moreover, grateful to the anonymous reviewers for their insightful comments and suggestions that helped improve the quality of the paper.
\end{acks}

\bibliographystyle{ACM-Reference-Format} 
\bibliography{references}


\newpage
\ 
\newpage 

\appendix

\section{Omitted Proofs}

This appendix contains the proofs omitted from the main body, namely those of \cref{lemma:increasing,lemma:continous-simulation}.

\lemmaincreasing*
\begin{proof} The proof proceeds analogously to the proof of \cref{lemma:increasing:ell1}. Instead of assuming at the beginning that $p$ maximizes $\val{p}$ with the no-loss-of-generality assumption that for any $i < j$ with $v_i = v_j$ we have $p_i \geq p_j$, we will now additionally assume that among such maximizing pseudo-distributions, $p$ has the smallest possible $\sum_{k = 1}^{n}p_k \leq \ell.$ We then proceed as before, defining $p'.$ The calculation that previously showed $\sum_{k = 1}^{n}p'_k < 1$ will now instead show that $\sum_{k = 1}^{n}p'_k < \sum_{k = 1}^{n}p_k,$ contradicting minimality of the sum. (No need to define $p''$ anymore.)
\end{proof}

\continoussimulation*
\begin{proof} Consider varying $E$ continuously starting from $E_\mathit{max} := \min\{v_{t + 1}, \ell \cdot (\sum_{k = 1}^{t + 1}\frac{1}{v_k})^{-1}\}$ down to $0.$ For a fixed $E,$ let $p^E$ be the corresponding maximal $E$-nice pseudo-distribution (which is well-defined as $E \in [0, E_\mathit{max}]$) 
and $i_E$ be the corresponding index such that $p^E$ is $(E, i_E)$-nice.  Moreover, let $k_E \geq t + 1$ be such that $v_{k_E} \geq E > v_{k_E + 1}$, assuming the convention that $v_{n + 1} := -1$ for uniformity of this definition. 
As $E$ decreases, $k_E$ monotonically increases. Moreover, as $E$ decreases, more water is pushed to the right, making $i_E$ also monotonically increase. As a result, the mapping $E \mapsto (i_E, k_E)$ can be described using a sequence $E_\mathit{max} = x_1 \geq \dots \geq x_m = 0$ with $m \leq 2n$ such that on each interval $(x_{q + 1}, x_q]$ for $1 \leq q < m$ the mapping is constant.

We are interested in finding an $E \in [0, E_\mathit{max}]$ that maximizes $\val{p^E}$, as from it the pseudo-distribution $p^E$ can be easily computed in linear time using the water-filling argument. Note that $\val{p^E}$ is continuous. The main idea of the proof will be to show that $\val{p^E}$ is additionally piece-wise linear with breakpoints (i.e., points where the slope can change) at $x_1, \dots, x_m$. Since a piece-wise linear function on a closed domain attains its maximum at one of the breakpoints, it will be enough to give a linear-time algorithm computing the breakpoints and evaluating $\val{p^E}$ at them. Our algorithm will proceed precisely along these lines. However, to remove the need to consider a number of edge cases, we will refine the approach slightly. In particular, it will follow that if for some $E$ we have $i_E = n$, then we can disregard all $E' < E$, meaning that we can ignore breakpoints after the first one with $i_E = n$. Our algorithm can be modified to compute all breakpoints and associated values, but this is not needed to get the maximum.

First, let us show that, indeed,
$\val{p^E}$ is piece-wise linear with breakpoints at $x_1, \dots, x_m$. Consider some $1 \leq q < m$ and any  $E \in (x_{q + 1}, x_q]$ and $\varepsilon > 0$ such that $E - \varepsilon \in [x_{q + 1}, x_q]$. We will show that $\val{p^{E - \varepsilon}} - \val{p^E} = \varepsilon \cdot C$, where $C$ does not depend on $\varepsilon$. Write $i := i_E = i_{E - \varepsilon}$ and $k := k_E = k_{E - \varepsilon}$.

To determine $\val{p^{E - \varepsilon}} - \val{p^E}$, let us compare $p^E$ and $p^{E - \varepsilon}$, i.e., start with $p = p^E$ and decrease $E$ by $\varepsilon$ to reach $p = p^{E - \varepsilon}$: how does the water move? The first $k$ containers lose water so as to lower the water level from $E$ to $E - \varepsilon$. Consider one of those containers $1 \leq j \leq k$. To be at water level $E$, container $j$ had $\frac{E}{v_j}$ units of water inside, and to be at water level $E - \varepsilon$ after $E$ is decreased, $\frac{E - \varepsilon}{v_j}$ units of water need to be left inside. Hence, in the process, it loses $\frac{\varepsilon}{v_j}$ units of water. Summing up over the $k$ containers, in total $\varepsilon \cdot \sum_{j = 1}^{k}\frac{1}{v_j}$ units of water are lost. Where does this water go? In container $i + 1$ if $i < n$, and nowhere otherwise, in which case it is lost. How does this movement of water impact $\val{p}$? Recall that $\val{p}$ is the sum of the water levels in containers $t + 1, \dots, n$. Hence, since the first $k \geq t + 1$ containers' water levels decreased by $\varepsilon$ and we are only interested in containers $t + 1, \dots, n$, we get that $\val{p}$ decreased by $(k - t) \cdot \varepsilon$. If $i < n$, i.e., the lost water went somewhere, we need to also account for the increase in water level of container $i + 1$. Since we know that in this case the amount of water added to this container is $\varepsilon\cdot\sum_{j = 1}^{k}\frac{1}{v_j}$, the water level in this container increased by $v_{i + 1} \cdot \varepsilon \cdot \sum_{j = 1}^{k}\frac{1}{v_j}$. Summarizing:%
\begin{equation}\label{eq:diff-value}
    \val{p^{E - \varepsilon}} - \val{p^E} = \left\{
    \begin{array}{ll}
           \varepsilon \cdot \sum_{j = 1}^{k}\frac{v_{i + 1}}{v_j} - (k - t) \cdot \varepsilon, & i < n \\
           - (k - t) \cdot \varepsilon, & i = n 
    \end{array}
    \right.
\end{equation}

Observe that, in both cases, $\varepsilon$ can be factored out, so we have shown that, indeed, $\val{p^{E - \varepsilon}} - \val{p^E} = \varepsilon \cdot C$, where $C$ does not depend on $\varepsilon$, as desired. 

Note that \cref{eq:diff-value} is useful more generally than in proving piece-wise linearity: if we set $E = x_j$ for some $1 \leq j < m$ and $\varepsilon = x_j - x_{j + 1}$, it gives a formula for $\val{p^{x_{j + 1}}} - \val{p^{x_j}}$ whose value is non-positive when $i_{x_j} = n$. Hence, let $1 \leq m' \leq m$ be the first index such that $i_{m'} = n$ and note that, by the previous, the maximum $\val{p^E}$ is attained at one of the points $x_1, \dots, x_{m'}$. 

We will now give a linear-time algorithm that computes the breakpoints $x_1, \dots, x_{m'}$ in
order and finds the associated $\val{p^E}$ at them. The value $m'$ is not known a priori, but the algorithm will detect when it has been reached and terminate. We will assume a precomputed prefix-sums array for $\frac{1}{v_1}, \dots, \frac{1}{v_n}$, so that sums of the kind present in \cref{eq:diff-value} can be determined in constant time. The algorithm maintains there variables $(E, X, k)$, which start initialized to $(x_1, \val{p^{x_1}}, k_{x_1})$, where recall that $x_1 = E_\mathit{max}$. This initialization is possible in linear time using the water-filling argument.
At the beginning of each step $1 \leq j \leq m'$ of the algorithm, $(E, X, k) = (x_j, \val{p^{x_j}}, k_{x_j})$ will hold. The step will either determine that $j = m'$ and terminate, or compute the required values and set $(E, X, k) \gets (x_{j + 1}, \val{p^{x_{j + 1}}}, k_{x_{j + 1}})$, in preparation for the next step.

First, let us see how to check whether $j = m'$ holds. This amounts to checking whether $i_E = n$, for which we need to compute $i_E$ (which we will also need for other purposes).
This can be achieved using the formula $i_E = \min\{n, k + \floor{\ell - \sum_{j = 1}^k\frac{E}{v_j}}\}$. The idea behind is as follows: out of the total of $\ell$ units of water, $\sum_{j = 1}^k\frac{E}{v_j}$ are used up by the first $k$ containers, and the remaining $\ell - \sum_{j = 1}^k\frac{E}{v_j}$ are left to fill containers $k + 1, \dots, n$ in this order, filling up one container completely before going on to the next one. There is enough water to completely fill $\floor{\ell - \sum_{j = 1}^k\frac{E}{v_j}}$ additional containers, from which the conclusion follows.

Now, assume $j < m'$ (i.e., $i_E < n$, from which also $k = k_E \leq i_E < n$ holds), so we want to compute $(x_{j + 1}, \val{p^{x_{j + 1}}}, k_{x_{j + 1}})$ for the next iteration. The most interesting part will be computing $x_{j + 1}$ (the other two will follow easily afterward). To compute $x_{j + 1}$, consider starting with $E$ and letting it decrease continuously so that its value throughout the process is $E - \varepsilon$ with $\varepsilon$ steadily increasing. We would like to stop the process when $E - \varepsilon = x_{j + 1}$ holds. How can we quickly determine the value of $\varepsilon$ for which this happens, so that we do not have to simulate the continuous process? During the process, two things can happen, and the one happening first is what stops the process: either $k_{E - \varepsilon} \neq k_{E}$, which first happens when $E - \varepsilon = v_{k + 1}$, or $i_{E - \varepsilon} \neq i_{E}$, which first happens when $p^{E - \varepsilon}_{i_E + 1} = 1$, i.e., container $i_E + 1$ gets filled. If the latter were to happen before the former, then the latter occurs when $p^E_{i_E + 1} + \varepsilon \cdot \sum_{j = 1}^k \frac{1}{v_j} = 1$. By the same reasoning as the one we used to find $i_E$, the value $p^E_{i_E + 1}$ can be determined as the fractional part of $\ell - \sum_{j = 1}^k\frac{E}{v_j}$, denoted by $\{\ell - \sum_{j = 1}^k\frac{E}{v_j}\}$. Hence, the process stops when $\varepsilon = \varepsilon_\mathit{min} := \min\{E - v_{k + 1}, \frac{1 - p^E_{i_E + 1}}{\sum_{j = 1}^k \frac{1}{v_j}}\}$, depending on which of the two cases occurs first. As a result, we can compute $x_{j + 1}$ as $x_j - \varepsilon_\mathit{min}$.

It remains to show how to compute $ \val{p^{x_{j + 1}}}$ and $k_{x_{j + 1}}$. To get $\val{p^{x_{j + 1}}}$, we use the fact that we know $X = \val{p^{x_j}}$ and employ \cref{eq:diff-value} to determine $\val{p^{x_{j + 1}}} - \val{p^{x_j}}$, adding it to $X$ to get $\val{p^{x_{j + 1}}}$. To get $k_{x_{j + 1}}$, we employ monotonicity: we increase $k$ repeatedly by 1 for as long as $x_{j + 1} \leq v_{k + 1}$. This takes amortized constant time per step.

For completeness, the full algorithm is given below:

\begin{algorithmic}[1]
\State $E \gets E_\mathit{max}$.
\State Compute $p^E$ using the water-filling argument and determine $\val{p^E}$ and $k_E$.
\State $(X, k) \gets (\val{p^E}, k_E)$
\For{$j \gets 1, 2, \dots$} \Comment{Invar.: $(E, X, k) = (x_j, \val{p^{x_j}}, k_{x_j})$}
    \State Compute $i_E = \min\{n, k + \floor{\ell - \sum_{j = 1}^k\frac{E}{v_j}}\}$.
    \State \textbf{if} $i_E = n$ \textbf{then} \textbf{break} \Comment{Stop when $j = m'$.}
    \State Compute $p^E_{i_E + 1} = \{\ell - \sum_{j = 1}^k\frac{E}{v_j}\}$ \Comment{Fractional part.}
    \State $\varepsilon_\mathit{min} \gets \min\{E - v_{k + 1}, \frac{1 - p^E_{i_E + 1}}{\sum_{j = 1}^k \frac{1}{v_j}}\}$
    \State $X \gets X + \left(v_{i_E + 1} \cdot \varepsilon_\mathit{min} \cdot \sum_{j = 1}^{k}\frac{1}{v_j}\right) - (k - t) \cdot \varepsilon_\mathit{min}$ 
    \State $E \gets E - \varepsilon_\mathit{min}$
    \While{$k + 1 < n$ and $E \leq v_{k + 1}$}
        \State $k \gets k + 1$
    \EndWhile
\EndFor \qedhere
\end{algorithmic}
\end{proof}

\end{document}